\documentclass[runningheads]{llncs}
\usepackage[utf8]{inputenc}
\usepackage{amsmath,amssymb,amsfonts,array,booktabs,multirow,makecell,wasysym,pifont,threeparttable}
\usepackage{algorithmic}
\usepackage{graphicx}
\usepackage{textcomp}
\usepackage{xcolor}
\usepackage{stmaryrd}

\usepackage{placeins}
\usepackage[ruled,vlined,linesnumbered]{algorithm2e}
\SetKwInOut{KwIn}{Input} % 改写输入/输出关键字
\SetKwInOut{KwOut}{Output}
\SetKw{KwLet}{Let}
\SetKwBlock{Client}{\textbf{Client-side:}}{}
\SetKwBlock{Server}{\textbf{Server-side:}}{}
\SetKwBlock{Init}{\textbf{Initialization:}}{}
\SetKwFunction{HSDiag}{HSDiag}

\definecolor{AlgBlue}{RGB}{60,120,200}
\newcommand{\act}{\sigma}

\newcommand{\algcommentsty}[1]{\textcolor{AlgBlue}{\itshape #1}}
\SetCommentSty{algcommentsty}

\newcommand{\AlgPhase}[1]{\textcolor{AlgBlue}{\itshape $\triangleright$~#1~$\triangleleft$}\;}

\newcommand{\AlgCmt}[1]{\tcp*[f]{\algcommentsty{#1}}}

\definecolor{BestPink}{RGB}{255,210,210}
\definecolor{SecondYellow}{RGB}{255,242,190}
\newcommand{\best}[1]{\begingroup\setlength{\fboxsep}{1pt}\colorbox{BestPink}{\strut #1}\endgroup}
\newcommand{\second}[1]{\begingroup\setlength{\fboxsep}{1pt}\colorbox{SecondYellow}{\strut #1}\endgroup}

\usepackage{tikz}    % 画竖条

\usepackage[colorlinks,
            linkcolor=blue,
            anchorcolor=blue,
            citecolor=blue,
            urlcolor=blue]{hyperref}

\def\BibTeX{{\rm B\kern-.05em{\sc i\kern-.025em b}\kern-.08em
    T\kern-.1667em\lower.7ex\hbox{E}\kern-.125emX}}

\begin{document}

\title{FBAN: A Fully Homomorphic Encryption Compatible Bottleneck Attention Network for Privacy-Preserving Behavioral Authentication\thanks{This paper has been accepted at the 28th International Conference on Information and Communications Security (ICICS 2026).}}

% \author{Anonymous}

\author{Jichao Xiong\inst{1} \and
 Atsuko Miyaji\inst{1}\thanks{Corresponding author.}\and
 Jiageng Chen\inst{2}
}

% \authorrunning{Anonymous et al.}
\titlerunning{FBAN: FHE-Compatible Bottleneck Attention Network}

\institute{Department of Information and Communications Technology, The University of Osaka, Suita, Osaka 565-0871, Japan \and
School of Computer Science, Central China Normal University, Wuhan 430079, China
\\
\email{\{miyaji\}@comm.eng.osaka-u.ac.jp}}

% \institute{Anonymous}

% \author{Anonymous}

% Use an envelope for the corresponding-author footnote on the title page.
\makeatletter
\let\originalfnsymbol\@fnsymbol
\renewcommand{\@fnsymbol}[1]{%
  \ifnum#1=2\relax\ding{41}\else\originalfnsymbol{#1}\fi}
\makeatother
\maketitle
\makeatletter
\let\@fnsymbol\originalfnsymbol
\makeatother

% Keep the introduction on page one after adding the acceptance footnote.
\vspace{-\baselineskip}
\begin{abstract}
Continuous authentication (CA) strengthens session security by repeatedly verifying the user during device interaction, yet it inherently relies on highly sensitive behavioral traces (e.g., fine-grained touch dynamics) that are often outsourced to cloud/edge services for scalable inference. This raises a fundamental \emph{privacy-in-use} challenge: protecting behavioral features \emph{during} computation, not only in transit or at rest. Fully homomorphic encryption (FHE) offers a principled solution, but deploying modern CA models under FHE remains difficult due to non-linearities and attention-style operations that incur high ciphertext cost.
We propose \textbf{FBAN}, a TFHE-compatible Bottleneck Attention Network and an end-to-end encrypted CA framework. FBAN is designed for integer-only execution via a two-stage pipeline (floating-point pretraining followed by quantization-aware training) and is compiled into TFHE circuits for homomorphic inference. We further specify a client--server protocol with session-bound blinding and decrypt-and-return verification, enabling the server to authenticate users without observing raw behavioral features. We provide a cryptographic security analysis against an honest-but-curious server under TFHE IND-CPA security, and formalize resistance to replay and impersonation without the TFHE secret key. Experiments on two public touchscreen datasets demonstrate that FBAN achieves strong authentication utility under encrypted inference while maintaining a lightweight model footprint, with TFHE parameters instantiated at \(\ge\!128\)-bit security.

\keywords{Continuous Authentication \and Privacy-preserving \and Fully Homomorphic Encryption (FHE) \and Biometrics security}
\end{abstract}

\section{Introduction}
Smartphones and wearables increasingly mediate security-critical actions—mobile payments, healthcare access, enterprise messaging, and location-dependent services—making user authentication a first-order security primitive~\cite{georgiev2022techniques}. Traditional mechanisms, including passwords/PINs and physiological biometrics (e.g., face or fingerprint), are widely deployed but remain vulnerable to credential theft and presentation or replay attacks, and—crucially—only verify the user at session entry~\cite{abuhamad2020sensor}. Once a session is unlocked, an adversary who gains short-term access can operate under the legitimate identity until the session terminates~\cite{bai2023cryptomask}.

Continuous authentication (CA) has emerged as a promising complement to entry-point authentication by continuously verifying the user in a passive and transparent manner during device interaction~\cite{sitova2015hmog}. Recent CA systems leverage behavioral biometrics such as keystroke dynamics~\cite{yao2026strokepin}, touch gestures, and inertial signals, often augmented by context or temporal modeling~\cite{stylios2021behavioral}; complementary schemes exploit richer physiological cues such as auditory-pupillary response for emerging platforms~\cite{zhu2023soundlock}. For example, touchscreen-based approaches model fine-grained touchpoints and contextual dependencies across consecutive swipes to improve recognition performance while controlling model complexity. Transformer-based frameworks further exploit attention to extract discriminative spatio-temporal representations from multimodal behavioral streams, achieving strong error rates on public datasets~\cite{TSCA}~\cite{dilshan2023behaveformer}.

Despite this progress, a critical gap remains underexplored in most research prototypes: the behavioral and contextual signals enabling CA are intrinsically sensitive personal data~\cite{ometov2018multi}. CA continuously captures and processes fine-grained behavioral traces (e.g., timing, pressure, motion, and usage context), which can reveal not only identity but also private attributes and usage patterns. CA surveys explicitly recognize that such continuous monitoring raises privacy concerns because personal data are collected, transmitted, and processed outside the user’s direct control, particularly in cloud-assisted settings~\cite{ayeswarya2024comprehensive}. In practice, many deployed applications already rely on cloud resources for scalability and on-demand computation~\cite{xiong2024privacy}; however, outsourcing CA inference to an untrusted server exacerbates privacy risk: the server can observe raw behavioral features, intermediate representations, or authentication scores, enabling profiling or unintended secondary use~\cite{baig2021security}.

A straightforward response---running CA entirely on-device---does not fully resolve the problem. Existing continuous authentication (CA) systems largely focus on recognition accuracy and efficiency, while overlooking a fundamental issue: the behavioral stream itself is highly sensitive and is typically processed in the clear. This creates a clear gap between current CA designs and practical privacy requirements. Meanwhile, modern CA models increasingly employ deep architectures with temporal modeling and attention-style components, whose computation, memory, and energy costs can exceed the budget of mobile devices, thereby motivating inference offloading to the edge or cloud~\cite{yang2012novel}. As a result, trustworthy CA requires \emph{privacy-in-use}, i.e., protection of behavioral features throughout inference rather than only in transit or at rest.

Fully homomorphic encryption (FHE) offers a principled way to realize privacy-in-use by enabling computation directly over encrypted data~\cite{xiong2024enhancing}. However, bringing FHE into modern CA remains difficult. On the one hand, nonlinear layers~\cite{gilad2016cryptonets}, normalization~\cite{zhang2024secure}, and especially attention-style interactions~\cite{park2024powerformer} can significantly increase circuit depth and ciphertext cost, leading to substantial latency~\cite{lin2025lightweight}; recent work on quantized attention under FHE for health sensing~\cite{xiong2025stresssentry} further highlights the tension between model expressiveness and encrypted-inference efficiency. On the other hand, typical CA models are designed in floating-point arithmetic with standard activations, whereas efficient encrypted deployment often favors tightly controlled quantized computation~\cite{chen2022x}. Although approximate-arithmetic schemes such as CKKS~\cite{CKKS} support real-valued operations, they are less suitable for our setting, where CA is repeatedly executed over many short behavioral windows and requires stable threshold-based decisions; in this regime, repeated approximate nonlinear evaluation and precision management can still incur expensive refresh operations. We therefore target TFHE~\cite{chillotti2020tfhe}, which naturally supports exact integer-domain computation after quantization and better matches the discrete inference-and-verification workflow considered here.

To address the above challenges---namely, (i) the lack of privacy protection in existing CA despite the sensitivity of behavioral data, (ii) the high homomorphic cost induced by deep nonlinear and attention-based CA models, and (iii) the mismatch between floating-point model design and efficient encrypted execution---we propose \textbf{FBAN}, an FHE-compatible Bottleneck Attention Network for privacy-preserving behavioral continuous authentication. FBAN is designed from the ground up for TFHE-style integer-domain execution: we introduce a bottlenecked serial attention formulation that avoids FHE-unfriendly operators and can be compiled into efficient encrypted circuits after quantization-aware training (QAT). Our end-to-end workflow supports (i) floating-point pretraining for strong baseline accuracy, (ii) integer-only QAT for faithful deployment under FHE constraints, and (iii) secure inference over encrypted behavioral streams, optionally combined with lightweight ensemble aggregation to further improve robustness.

Our main contributions are as follows:
\begin{itemize}
    \item \textbf{Privacy-in-use continuous authentication:} We formulate privacy preserving behavioral continuous authentication under an untrusted inference setting and present an end-to-end FHE-based inference pipeline that keeps sensitive behavioral inputs encrypted throughout computation, addressing privacy concerns highlighted in CA surveys.
    
    \item \textbf{FBAN: FHE-compatible bottleneck attention:} We propose FBAN and an integer-only deployment pipeline via QAT, enabling attention-based modeling to be compiled and executed efficiently under TFHE-style FHE constraints.

    \item \textbf{Strong accuracy–latency trade-offs with rigorous evaluation:} We empirically show that FBAN achieves competitive authentication performance against representative CA baselines (including Transformer-based context-aware models) while reducing model complexity and delivering end-to-end FHE latency practical for periodic or risk-triggered re-verification, supported by an evaluation protocol aligned with realistic threat models.
\end{itemize}

\section{Preliminaries}
\label{sec:preliminaries}

\subsection{Notation}
Scalars are denoted by italic letters (e.g., $x$), vectors by bold lowercase (e.g., $\mathbf{x}$), and matrices by bold uppercase (e.g., $\mathbf{W}$). For an integer $n$, we write $[n]=\{1,\dots,n\}$. We use $\lambda_{\mathsf{sid}}$ for the bit-length of session identifiers and $\ell$ for the bit-length of the one-time pad used in the masked decision return. For a vector $\mathbf{v}$, its $j$-th entry is $\mathbf{v}[j]$. We use $\mathbb{I}[\cdot]$ for the indicator function.

\paragraph{Continuous behavioral authentication stream.}
Let $\mathbf{x}_t\in\mathbb{R}^{d}$ be the feature vector extracted at time step $t$ from behavioral signals (e.g., touch/keystroke/IMU-derived descriptors). A length-$L$ sliding window is
\begin{equation}
\mathbf{X}_t=\big[\mathbf{x}_{t-L+1},\dots,\mathbf{x}_t\big]\in\mathbb{R}^{L\times d}.
\end{equation}
A continuous authenticator with parameters $\theta$ maps each window $\mathbf{X}_t$ to a real-valued score $s_t$ and a binary decision $\hat{y}_t$:
\begin{equation}
s_t=f_{\theta}(\mathbf{X}_t),\qquad \hat{y}_t=\mathbb{I}[s_t\ge \tau],
\end{equation}
where $\tau$ is a decision threshold. Throughout the paper, the score is interpreted as an acceptance confidence (larger is more likely to be genuine), though the construction applies to any monotone scoring rule.

\subsection{Ensemble Learning}
To improve robustness and reduce variance~\cite{sagi2018ensemble}, we consider an ensemble of $N$ base authentication models $\{f_{\theta_i}\}_{i=1}^{N}$. For a given window $\mathbf{X}_t$, each member outputs a score
$s_{t,i}=f_{\theta_i}(\mathbf{X}_t)$, $i\in[N]$.
Let $\mathbf{s}_t=(s_{t,1},\dots,s_{t,N})^\top\in\mathbb{R}^{N}$ collect all member scores. An ensemble aggregator maps $\mathbf{s}_t$ to a final score $\bar{s}_t$, which is then thresholded: $\hat{y}_t=\mathbb{I}[\bar{s}_t\ge \tau]$.
Standard choices include simple averaging and weighted averaging; we adopt \emph{stacking} (a linear meta-model) as it can be directly optimized for an operating constraint.
Let $\phi=(\mathbf{w},c)$ with $\mathbf{w}\in\mathbb{R}^{N}$, $c\in\mathbb{R}$. The stacked score is
\begin{equation}
\bar{s}_t=g_{\phi}(\mathbf{s}_t)=\mathbf{w}^\top \mathbf{s}_t+c.
\end{equation}
The parameters $\phi$ are trained on a held-out set (e.g., minimizing FRR subject to a target FAR).

\subsection{TFHE-Based Fully Homomorphic Encryption}
Fully homomorphic encryption (FHE) enables computation over encrypted data: given ciphertexts encrypting plaintext values, a server can evaluate a function and obtain a ciphertext encrypting the correct result, without learning the plaintexts~\cite{gentry2009fully}. We instantiate FHE with \emph{TFHE}, a gate-bootstrapped FHE family well-suited for boolean and integer arithmetic via programmable bootstrapping~\cite{chillotti2020tfhe}.

\paragraph{TFHE interface.}
Let $\lambda$ be the security parameter. We abstract TFHE as
\begin{equation}
\mathsf{TFHE}=(\mathsf{KeyGen},\mathsf{Enc},\mathsf{Dec},\mathsf{Eval}).
\end{equation}
\begin{itemize}
  \item $\mathsf{TFHE.KeyGen}(1^\lambda)\rightarrow (\mathsf{sk},\mathsf{ek})$ outputs a secret key $\mathsf{sk}$ and an evaluation key $\mathsf{ek}$. The evaluation key contains the auxiliary material required for homomorphic evaluation (e.g., bootstrapping and key-switching keys) and is intended to be shared with the evaluator/server.
  \item $\mathsf{TFHE.Enc}(\mathsf{sk},m)\rightarrow \llbracket m\rrbracket$ encrypts a plaintext message $m$ into a ciphertext $\llbracket m\rrbracket$.
  \item $\mathsf{TFHE.Dec}(\mathsf{sk},\llbracket m\rrbracket)\rightarrow m$ decrypts a ciphertext.
  \item $\mathsf{TFHE.Eval}(\mathsf{ek},\mathcal{C},\cdot)\rightarrow \llbracket \mathcal{C}(\cdot)\rrbracket$ homomorphically evaluates a circuit $\mathcal{C}$ on ciphertext inputs and returns a ciphertext of the output.
\end{itemize}
In our setting, $\mathcal{C}$ represents an \emph{integer-only inference circuit} (including additions, multiplications, comparisons, and piecewise-linear primitives after quantization), which aligns naturally with TFHE's evaluation model.

\paragraph{Correctness and homomorphic evaluation.}
For any supported circuit $\mathcal{C}$ and plaintext $m$, correctness requires
\begin{equation}
\mathsf{TFHE.Dec}\big(\mathsf{sk},\,\mathsf{TFHE.Eval}(\mathsf{ek},\mathcal{C},\llbracket m\rrbracket)\big)=\mathcal{C}(m),
\end{equation}
except with negligible probability in $\lambda$.

For clarity, we also write homomorphic evaluation for basic integer operations using circuits $\mathcal{C}_{+}$ and $\mathcal{C}_{\times}$:
\begin{align}
\llbracket m_1+m_2\rrbracket &\leftarrow \mathsf{TFHE.Eval}(\mathsf{ek},\mathcal{C}_{+},\llbracket m_1\rrbracket,\llbracket m_2\rrbracket), \label{eq:tfhe-add}\\
\llbracket m_1\cdot m_2\rrbracket &\leftarrow \mathsf{TFHE.Eval}(\mathsf{ek},\mathcal{C}_{\times},\llbracket m_1\rrbracket,\llbracket m_2\rrbracket). \label{eq:tfhe-mul}
\end{align}
% These equations emphasize the operational view used later: the client encrypts inputs once, and the server evaluates a fixed inference circuit on ciphertexts to produce an encrypted score.

\subsection{Integer Quantization for TFHE-Compatible Neural Inference}
TFHE evaluation is most practical when the target computation is expressed over integers or bits. We therefore adopt a standard uniform affine quantization to convert real-valued tensors into $b$-bit integers, enabling an \emph{integer-only network} whose forward pass can be compiled into a circuit $\mathcal{C}_\theta$.

Let $\Delta>0$ be a scale and $z\in\mathbb{Z}$ be a zero-point. For a real scalar $x$, define the $b$-bit quantizer
\begin{equation}
Q_b(x)=\mathrm{clip}\Big(\big\lfloor x/\Delta \big\rceil+z,\; 0,\;2^b-1\Big)\in\mathbb{Z}_{2^b}, \label{eq:quant}
\end{equation}
where $\lfloor\cdot\rceil$ denotes rounding to the nearest integer and $\mathrm{clip}(\cdot)$ clamps values into the representable range. The corresponding de-quantization is
\begin{equation}
\tilde{x}=D_b(q)=\Delta\cdot(q-z),\qquad q\in\mathbb{Z}_{2^b}. \label{eq:dequant}
\end{equation}
For vectors/matrices, $Q_b(\cdot)$ and $D_b(\cdot)$ are applied element-wise.

We denote the quantized (integer-only) version of a network by $f^{(Q)}_{\theta}$. Its forward computation is represented as an integer circuit $\mathcal{C}_{\theta}$ over $\mathbb{Z}_{2^b}$. Given an input window $\mathbf{X}_t$, we quantize it to $\mathbf{Q}_t=Q_b(\mathbf{X}_t)$ and compute an integer score
\begin{equation}
s^{(Q)}_t=\mathcal{C}_{\theta}(\mathbf{Q}_t)\in\mathbb{Z}_{2^b}. \label{eq:int-score}
\end{equation}
Encrypted inference under TFHE then evaluates the same circuit on ciphertext inputs:
\begin{equation}
\llbracket s^{(Q)}_t\rrbracket
=\mathsf{TFHE.Eval}\big(\mathsf{ek},\mathcal{C}_{\theta},\llbracket \mathbf{Q}_t\rrbracket\big). \label{eq:fhe-score}
\end{equation}
This notation will be reused when specifying the end-to-end protocol: the client produces $\llbracket \mathbf{Q}_t\rrbracket$, the server evaluates $\mathcal{C}_\theta$ homomorphically, and the client decrypts to obtain $s_t^{(Q)}$ (or a post-processed decision) while the server learns no plaintext behavioral features.

\section{Framework Design}
\label{sec:framework}

\noindent
This section details the proposed \textbf{FBAN} framework from both the \emph{model-design} and the \emph{cryptographic deployment} perspectives.
We first formalize the client--server workflow and the per-user template/policy representation, then present the TFHE-compatible FBAN architecture,
and finally specify the end-to-end interaction protocol that supports privacy-preserving continuous authentication.
In particular, the complete \emph{offline} procedure that produces a user-specific template---including two-stage training (pre-training $\rightarrow$ QAT),
TFHE compilation, and fixed-point stacking calibration---is summarized in Algorithm~\ref{alg:template}.
Given the resulting template, the \emph{online} blinded authentication interaction executed at each time step is specified in Algorithm~\ref{alg:protocol}.

\subsection{System Overview}
\label{subsec:framework-overview}
Fig.~\ref{fig:fban_overview} depicts the end-to-end workflow of \textbf{FBAN}.
At each time step $t$, the client device $\mathcal{C}$ extracts a behavioral window
$\mathbf{X}_t\in\mathbb{R}^{L\times d}$ and applies a deterministic preprocessing pipeline
$\mathsf{Norm}(\cdot)$ (e.g., cleaning and standardization). We treat $\mathsf{Norm}$ as a public, deterministic transformation with fixed parameters (e.g., means/standard deviations) learned offline and included in public parameters. This avoids any ambiguity about hidden state in preprocessing. Followed by $b$-bit quantization $Q_b(\cdot)$,
to obtain an integer tensor
\begin{equation}
\tilde{\mathbf{X}}_t := Q_b\!\big(\mathsf{Norm}(\mathbf{X}_t)\big)\in \mathbb{Z}_{2^b}^{L\times d}.
\label{eq:fw-quant-window}
\end{equation}
The client encrypts $\tilde{\mathbf{X}}_t$ under TFHE and sends only ciphertexts to the server $\mathcal{S}$ for inference.
The server evaluates TFHE-compiled FBAN circuits and returns an encrypted one-time-padded decision bit for decrypt-and-return verification
(Algorithm~\ref{alg:protocol}).

\paragraph{Template/policy.}
For each user identity $\mathsf{uid}$, the server stores a template/policy
\[
\mathsf{T}_{\mathsf{uid}} := \Big(\{\mathcal{C}_{\mathsf{uid},i}\}_{i=1}^{N},\ \hat{\mathbf{w}}_{\mathsf{uid}},\ \hat{\tau}_{\mathsf{uid}}\Big),
\]
where $\{\mathcal{C}_{\mathsf{uid},i}\}$ are TFHE inference circuits of an ensemble of size $N$,
$\hat{\mathbf{w}}_{\mathsf{uid}}=(\hat{w}_{\mathsf{uid},1},\ldots,\hat{w}_{\mathsf{uid},N})\in\mathbb{Z}^N$ are fixed-point stacking weights,
and $\hat{\tau}_{\mathsf{uid}}\in\mathbb{Z}$ is the correspondingly scaled decision threshold chosen at a target operating point (e.g., FAR $=\rho$).
The template is generated offline following Algorithm~\ref{alg:template} and then reused for all subsequent online authentications.

\paragraph{Design goals.}
FBAN targets:
(i) \textbf{privacy-in-use} for behavioral windows during inference via TFHE;
(ii) \textbf{FHE-compatibility} via integer-only computation and circuit-friendly attention;
(iii) \textbf{practicality} via high authentication accuracy and encrypted latency suitable for periodic or risk-triggered re-verification.

\begin{figure}[htbp]
    \centering
    \includegraphics[width=\linewidth]{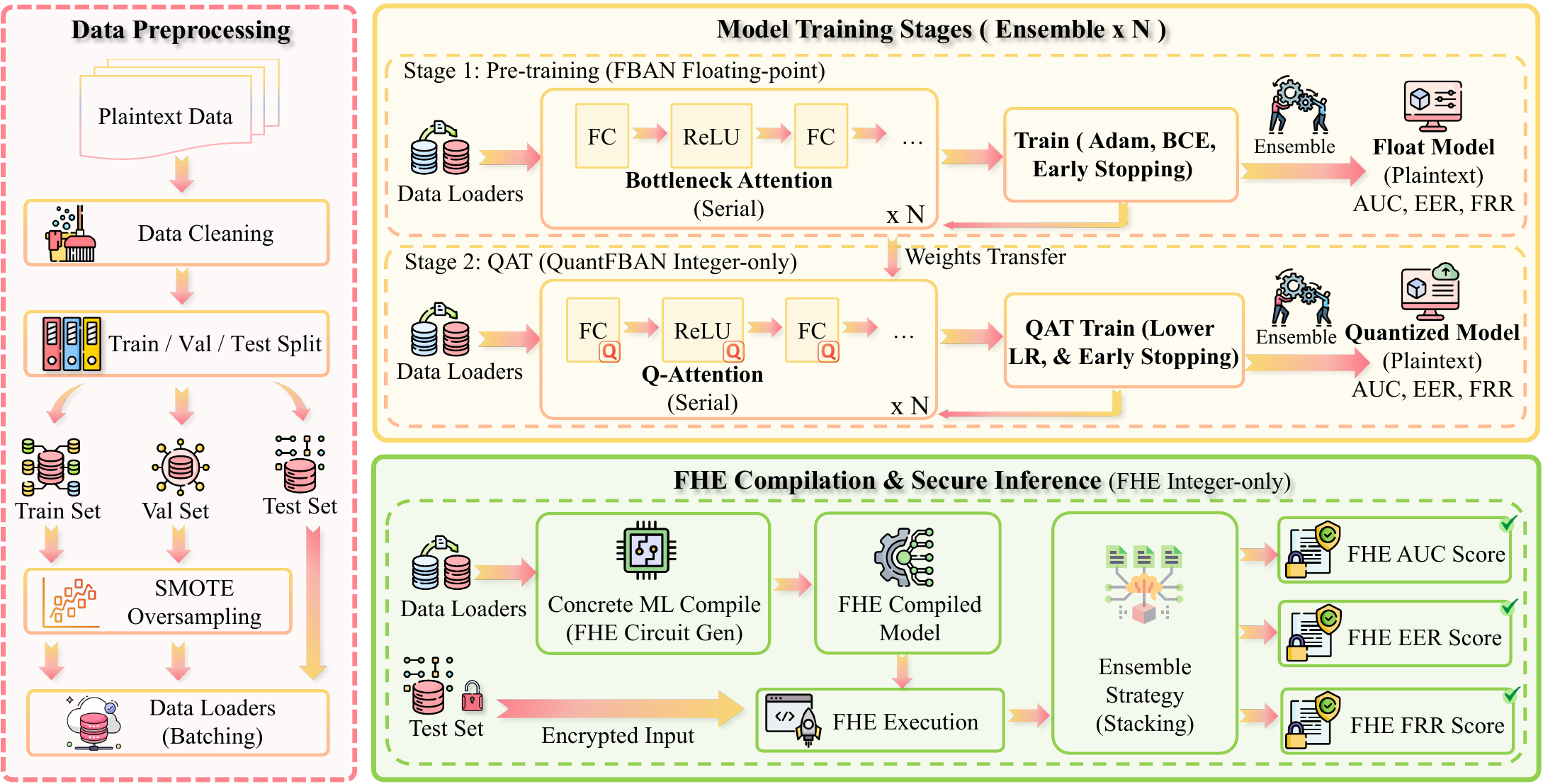}
    \caption{FBAN workflow: preprocessing, two-stage training (pre-training $\rightarrow$ QAT), TFHE compilation, and secure inference with ensemble stacking.}
    \label{fig:fban_overview}
\end{figure}

\subsection{TFHE-Compatible Bottleneck Attention Network}
\label{subsec:framework-fban}
Let $\tilde{\mathbf{x}}_t=\mathrm{vec}(\tilde{\mathbf{X}}_t)\in\mathbb{Z}_{2^b}^{D}$ with $D=L\cdot d$.
A naive attention layer (softmax normalization and multiplicative gating) is ill-suited for TFHE due to high circuit cost.
FBAN therefore employs \emph{serial bottleneck attention} that uses only affine maps, an integer-friendly activation, and additions,
which can be compiled into efficient TFHE circuits.

\paragraph{Bottleneck attention block.}
Given an intermediate hidden vector $\mathbf{h}\in\mathbb{Z}_{2^b}^{m}$,
a bottleneck attention block is defined as
\begin{equation}
\label{eq:fw-ba}
\mathsf{BA}(\mathbf{h})
~:=~
\act\!\Big(\mathbf{W}_{e}\,\act(\mathbf{W}_{s}\mathbf{h}+\mathbf{b}_{s})+\mathbf{b}_{e}\Big),
\end{equation}
where $\mathbf{W}_{s}\in\mathbb{Z}_{2^b}^{m'\times m}$, $\mathbf{W}_{e}\in\mathbb{Z}_{2^b}^{m\times m'}$ with $m'\ll m$,
and $\act(\cdot)$ denotes an integer-compatible activation (e.g., quantized ReLU) realizable under TFHE.
Importantly, \eqref{eq:fw-ba} avoids ciphertext--ciphertext multiplicative gating, keeping TFHE evaluation efficient.

\paragraph{FBAN forward computation.}
A representative instance (matching our implementation) is:
\begin{align}
\label{eq:fw-fwd}
\mathbf{h}_1 &:= \act(\mathbf{W}_1 \tilde{\mathbf{x}}_t + \mathbf{b}_1), \nonumber\\
\mathbf{h}_2 &:= \act(\mathbf{W}_2 \mathbf{h}_1 + \mathbf{b}_2), \nonumber\\
\mathbf{h}_3 &:= \mathsf{BA}(\mathbf{h}_2), \nonumber\\
\mathbf{h}_4 &:= \act(\mathbf{W}_3 \mathbf{h}_3 + \mathbf{b}_3), \nonumber\\
\mathbf{h}_5 &:= \act(\mathbf{W}_4 \mathbf{h}_4 + \mathbf{b}_4), \nonumber\\
\mathbf{h}_6 &:= \mathsf{BA}(\mathbf{h}_5), \nonumber\\
s_t &:= \mathbf{w}_{o}^{\top}\act(\mathbf{W}_5 \mathbf{h}_6 + \mathbf{b}_5)+b_o \in \mathbb{Z},
\end{align}
where $s_t$ is the integer logit score. All parameters are trained under quantization constraints so that every operation in
\eqref{eq:fw-fwd} is valid over bounded integers, enabling direct compilation to TFHE circuits (Algorithm~\ref{alg:template}).

\subsection{Offline Model Preparation for TFHE Inference}
\label{subsec:framework-train}
This subsection describes how a per-user template $\mathsf{T}_{\mathsf{uid}}$ is constructed offline.
Given the enrollment dataset $\mathcal{D}_{\mathsf{uid}}=\{(\mathbf{X},y)\}$ with $y\in\{0,1\}$,
FBAN is trained in two stages: (i) floating-point pre-training to obtain a strong initialization, and
(ii) quantization-aware training (QAT) to produce an integer-only network.
After QAT, each member is compiled into a TFHE circuit using a calibration subset (for range/quantization calibration).

\paragraph{FHE-friendly stacking.}
Let $s^{(i)}_t$ be the plaintext score of member $i$ at time $t$ (Eq.~\eqref{eq:fw-fwd}).
We aggregate members using a stacking-style linear combiner
\begin{equation}
\label{eq:fw-stack}
S_t ~:=~ \sum_{i=1}^{N} \hat{w}_{\mathsf{uid},i}\cdot s^{(i)}_t,
\end{equation}
where $\hat{w}_{\mathsf{uid},i}\in\mathbb{Z}$ are fixed-point weights learned on the validation set to optimize the operating metric (e.g., FRR@FAR).
The decision threshold $\hat{\tau}_{\mathsf{uid}}$ is chosen under the same fixed-point convention.
The complete template generation pipeline---including data preparation, two-stage training, TFHE compilation, and stacking calibration---is
summarized in Algorithm~\ref{alg:template}.

\begin{algorithm}[t]
\caption{Template Generation for User $\mathsf{uid}$ (Two-Stage Training $\rightarrow$ TFHE Circuits $\rightarrow$ Stacking Policy)}
\label{alg:template}
\KwIn{Enrollment dataset $\mathcal{D}_{\mathsf{uid}}$; ensemble size $N$; quantization bit-width $b$; target FAR $\rho$}
\KwOut{Template $\mathsf{T}_{\mathsf{uid}}=(\{\mathcal{C}_{\mathsf{uid},i}\}_{i=1}^{N},\hat{\mathbf{w}}_{\mathsf{uid}},\hat{\tau}_{\mathsf{uid}})$}

\BlankLine
\AlgPhase{Data Preparation}
Split $\mathcal{D}_{\mathsf{uid}}$ into $(\mathcal{D}^{\mathrm{tr}},\mathcal{D}^{\mathrm{val}},\mathcal{D}^{\mathrm{te}})$\;
(Optional) Apply class re-balancing on $\mathcal{D}^{\mathrm{tr}}$ (e.g., SMOTE)\AlgCmt{Train-only to avoid contamination}\;

\BlankLine
\For{$i\leftarrow 1$ \KwTo $N$}{
\AlgPhase{Stage 1: Pre-training (Floating-Point FBAN)}
Train FBAN$_i$ on $\mathcal{D}^{\mathrm{tr}}$ with early stopping on $\mathcal{D}^{\mathrm{val}}$\AlgCmt{e.g., Adam + BCE}\;

\AlgPhase{Stage 2: QAT (Integer-Only QuantFBAN)}
Initialize QuantFBAN$_i$ by weight transfer from FBAN$_i$\AlgCmt{Warm-start improves convergence}\;
QAT-train QuantFBAN$_i$ with $b$-bit quantization on $\mathcal{D}^{\mathrm{tr}}$\;

\AlgPhase{TFHE Compilation}
Select a calibration subset $\mathcal{D}^{\mathrm{cal}}\subseteq \mathcal{D}^{\mathrm{tr}}$\AlgCmt{Range calibration}\;
Compile QuantFBAN$_i$ into a TFHE circuit $\mathcal{C}_{\mathsf{uid},i}$\AlgCmt{Circuit gen + parameterization}\;
}

\BlankLine
\AlgPhase{Stacking Policy (Fixed-Point)}
Compute validation scores $\{s^{(i)}_t\}$ from members in plaintext simulation\;
Learn $\hat{\mathbf{w}}_{\mathsf{uid}}\in\mathbb{Z}^{N}$ to minimize FRR@FAR$(\rho)$ under \eqref{eq:fw-stack}\;
Select threshold $\hat{\tau}_{\mathsf{uid}}$ on $\mathcal{D}^{\mathrm{val}}$ at FAR $=\rho$\;

\Return $\mathsf{T}_{\mathsf{uid}}=(\{\mathcal{C}_{\mathsf{uid},i}\}_{i=1}^{N},\hat{\mathbf{w}}_{\mathsf{uid}},\hat{\tau}_{\mathsf{uid}})$\;
\end{algorithm}

\subsection{Blinded TFHE Continuous Authentication Protocol}
\label{subsec:framework-protocol}
This subsection specifies the online interaction between $\mathcal{C}$ and $\mathcal{S}$ for continuous authentication.
The client holds the TFHE secret key $\mathsf{sk}$ and decrypts, while the server holds the evaluation key $\mathsf{ek}$ and the user template.
At each time step $t$, the server enforces \emph{fresh blinding} to (i) bind submissions to the current session and
(ii) prevent a malicious client from deterministically forging an always-accept response based on raw scores.

\paragraph{Input blinding (session binding).}
The server samples a fresh additive mask $\boldsymbol{\beta}_t\in\mathbb{Z}_{2^b}^{L\times d}$ and requires the client to encrypt
$\tilde{\mathbf{X}}_t^\star := \tilde{\mathbf{X}}_t + \boldsymbol{\beta}_t$.
The server removes $\boldsymbol{\beta}_t$ \emph{inside TFHE} by evaluating a constant-subtraction circuit.

\paragraph{Masked decision return (anti-forgery and minimal disclosure).}
To avoid leaking the raw score $S_t$ and to prevent a malicious client from \emph{forging} the server-side accept/reject outcome, the server performs the threshold comparison under TFHE and returns only a one-time padded \emph{decision bit}.
Let
\[
d_t := \mathbb{I}\!\left[S_t \ge \hat{\tau}_{\mathsf{uid}}\right]\in\{0,1\}.
\]
The server samples a fresh pad $r_t \xleftarrow{\$} \mathbb{Z}_{2^\ell}$ (kept secret from the client) and returns a ciphertext of
\begin{equation}
\label{eq:fw-blindout}
U_t := r_t + d_t \pmod{2^\ell}.
\end{equation}
We encode the bit $d_t$ as an integer in $\mathbb{Z}_{2^\ell}$ (i.e., $0$ or $1$) so that the addition in Eq.~\eqref{eq:fw-blindout} is well-defined. The client decrypts $U_t$ and sends it back in the clear. The server recovers $d_t$ by subtracting $r_t$ (mod $2^\ell$) and outputs \textsf{Accept} iff $d_t=1$.
This design yields a standard \emph{decrypt-and-return} soundness check while revealing neither $S_t$ nor $d_t$ to the client device.

The complete online protocol is detailed in Algorithm~\ref{alg:protocol}.

\begin{algorithm}[htbp]
\SetAlFnt{\scriptsize}
\SetAlgoNlRelativeSize{-2}
\setlength{\algomargin}{0.6em}
\SetInd{0.25em}{0.6em}
\renewcommand{\baselinestretch}{0.93}\selectfont
\caption{Protocol $\Pi_{\mathrm{FBAN}}$: TFHE-Based Privacy-Preserving Continuous Authentication (Blinded)}
\label{alg:protocol}
\KwIn{User identity $\mathsf{uid}$; template $\mathsf{T}_{\mathsf{uid}}=(\{\mathcal{C}_{\mathsf{uid},i}\}_{i=1}^{N},\hat{\mathbf{w}}_{\mathsf{uid}},\hat{\tau}_{\mathsf{uid}})$}
\KwOut{Server decision \textsf{Accept} / \textsf{Reject} at each time step $t$}

\BlankLine
\tcp*[l]{Key setup and enrollment are performed as in Alg.~\ref{alg:template}; we describe the online phase.}
\BlankLine
\ForEach{time step $t$}{
\Server{
\AlgPhase{Fresh Blinding Material}
Sample fresh session id $\mathsf{sid}_t \xleftarrow{\$} \{0,1\}^{\lambda_{\mathsf{sid}}}$ and record it as used\;
$\boldsymbol{\beta}_t \xleftarrow{\$} \mathbb{Z}_{2^b}^{L\times d}$\AlgCmt{Input one-time mask}\;
$r_t \xleftarrow{\$} \mathbb{Z}_{2^\ell}$\AlgCmt{Secret one-time pad for masked decision return}\;
Send $(\mathsf{sid}_t,\boldsymbol{\beta}_t)$ to $\mathcal{C}$\AlgCmt{$r_t$ kept secret}\;
}

\Client{
\AlgPhase{Quantize, Blind, Encrypt}
Extract window $\mathbf{X}_t$ and compute $\tilde{\mathbf{X}}_t := Q_b(\mathsf{Norm}(\mathbf{X}_t))$\AlgCmt{Preprocessing}\;
$\tilde{\mathbf{X}}_t^\star \leftarrow \tilde{\mathbf{X}}_t + \boldsymbol{\beta}_t$\AlgCmt{Input blinding}\;
$\llbracket \tilde{\mathbf{X}}_t^\star \rrbracket \leftarrow \mathsf{TFHE.Enc}(\mathsf{sk},\tilde{\mathbf{X}}_t^\star)$\;
Send $(\mathsf{uid},\mathsf{sid}_t,\llbracket \tilde{\mathbf{X}}_t^\star \rrbracket)$ to $\mathcal{S}$\;
}

\Server{
\AlgPhase{Session Check}
\If{$\mathsf{sid}_t$ is not fresh}{Output \textsf{Reject}\;}
\AlgPhase{In-Circuit Unblinding}
\KwLet{$\mathcal{C}_{-\boldsymbol{\beta}_t}$}{the circuit $\mathbf{U}\mapsto \mathbf{U}-\boldsymbol{\beta}_t$ over $\mathbb{Z}_{2^b}^{L\times d}$}\AlgCmt{Constant subtraction}\;
$\llbracket \tilde{\mathbf{X}}_t \rrbracket \leftarrow \mathsf{TFHE.Eval}(\mathsf{ek},\mathcal{C}_{-\boldsymbol{\beta}_t},\llbracket \tilde{\mathbf{X}}_t^\star \rrbracket)$\;

\AlgPhase{TFHE Inference (Ensemble)}
\For{$i\leftarrow 1$ \KwTo $N$}{
$\llbracket s_t^{(i)} \rrbracket \leftarrow \mathsf{TFHE.Eval}(\mathsf{ek},\mathcal{C}_{\mathsf{uid},i},\llbracket \tilde{\mathbf{X}}_t \rrbracket)$\AlgCmt{FBAN member $i$}\;
}

\AlgPhase{Stacking in TFHE}
$\llbracket S_t \rrbracket \leftarrow \sum_{i=1}^{N}\hat{w}_{\mathsf{uid},i}\cdot \llbracket s_t^{(i)} \rrbracket$\AlgCmt{Eq.~\eqref{eq:fw-stack}}\;

\AlgPhase{Masked Decision in TFHE}
\KwLet{$\mathcal{C}_{\ge \hat{\tau}_{\mathsf{uid}}}$}{the circuit $z\mapsto \mathbb{I}[z\ge \hat{\tau}_{\mathsf{uid}}]$ outputting a bit $d_t\in\{0,1\}$}\;
$\llbracket d_t \rrbracket \leftarrow \mathsf{TFHE.Eval}(\mathsf{ek},\mathcal{C}_{\ge \hat{\tau}_{\mathsf{uid}}},\llbracket S_t \rrbracket)$\;
\KwLet{$\mathcal{C}_{+r_t}$}{the circuit $u\mapsto u+r_t$ over $\mathbb{Z}_{2^\ell}$}\AlgCmt{Eq.~\eqref{eq:fw-blindout}}\;
$\llbracket U_t \rrbracket \leftarrow \mathsf{TFHE.Eval}(\mathsf{ek},\mathcal{C}_{+r_t},\llbracket d_t \rrbracket)$\;
Send $(\mathsf{sid}_t,\llbracket U_t \rrbracket)$ to $\mathcal{C}$\;
}

\Client{
\AlgPhase{Decrypt-and-Return}
$U_t \leftarrow \mathsf{TFHE.Dec}(\mathsf{sk},\llbracket U_t \rrbracket)$\AlgCmt{Masked decision; reveals neither $S_t$ nor $d_t$}\;
Send $(\mathsf{sid}_t,U_t)$ to $\mathcal{S}$\;
}

\Server{
\AlgPhase{Verification and Decision}
\If{$\mathsf{sid}_t$ was seen before for this $t$}{Output \textsf{Reject}\;}\AlgCmt{Replay protection}
$d'_t \leftarrow (U_t - r_t)\bmod 2^\ell$\;
\If{$d'_t = 1$}{Output \textsf{Accept}\;}\Else{Output \textsf{Reject}\;}
}
}
\end{algorithm}

%======================================================================
% Threat Model (Honest-but-Curious Server)
% - Cryptography-style, symbol-heavy, reuses existing notation:
%   C,S, uid, X_t, \tilde{X}_t, Q_b, Norm, ek, sk, T_uid, sid_t,
%   beta_t, alpha_t, gamma_t, r_t, s_t^{(i)}, S_t, Z_t, \llbracket \cdot \rrbracket
% - Designed to plug after Framework Design and before Experiments
%======================================================================

\section{Threat Model and Security Goals}
\label{sec:threat}

\paragraph{System model and trust assumptions.}
We consider the two-party setting induced by Protocol~$\Pi_{\mathrm{FBAN}}$ (Algorithm~\ref{alg:protocol}),
between a client device $\mathcal{C}$ (holding the behavioral stream) and an authentication server $\mathcal{S}$.
At each time step $t$, $\mathcal{C}$ extracts a behavioral window $\mathbf{X}_t\in\mathbb{R}^{L\times d}$ and deterministically computes
the quantized integer representation $\tilde{\mathbf{X}}_t := Q_b(\mathsf{Norm}(\mathbf{X}_t))\in\mathbb{Z}_{2^b}^{L\times d}$
(see Eq.~\eqref{eq:fw-quant-window}).
$\mathcal{C}$ runs TFHE key generation $(\mathsf{sk},\mathsf{ek})\leftarrow \mathsf{TFHE.KeyGen}(1^\lambda)$,
keeps $\mathsf{sk}$ locally, and registers $\mathsf{ek}$ to $\mathcal{S}$.
For each identity $\mathsf{uid}$, $\mathcal{S}$ stores the user template/policy
$\mathsf{T}_{\mathsf{uid}}=(\{\mathcal{C}_{\mathsf{uid},i}\}_{i=1}^{N},\hat{\mathbf{w}}_{\mathsf{uid}},\hat{\tau}_{\mathsf{uid}})$
generated offline (Algorithm~\ref{alg:template}).
We assume authenticated communication (e.g., standard channel authentication) to prevent active man-in-the-middle modification.
Confidentiality of the channel is \emph{not} required for privacy, since all sensitive client inputs are encrypted under TFHE.

\paragraph{Adversary model (honest-but-curious server).}
We model $\mathcal{S}$ as an \emph{honest-but-curious} (a.k.a.\ semi-honest) adversary:
it follows Algorithm~\ref{alg:protocol} exactly, but attempts to infer sensitive information about $\{\mathbf{X}_t\}$ (and hence $\{\tilde{\mathbf{X}}_t\}$)
from its internal state and the protocol transcript.
Formally, let $\mathcal{A}_{\mathcal{S}}$ be a PPT algorithm controlling the server.
For each time step $t$, define the server-side transcript item
\begin{equation}
\label{eq:th-tr}
\mathsf{tr}_t :=
\Big(\mathsf{uid},\mathsf{sid}_t,\boldsymbol{\beta}_t,\llbracket \tilde{\mathbf{X}}_t^\star\rrbracket,\llbracket U_t\rrbracket\Big),
\end{equation}
where $\tilde{\mathbf{X}}_t^\star=\tilde{\mathbf{X}}_t+\boldsymbol{\beta}_t$ is the client-side blinded input (Algorithm~\ref{alg:protocol}),
and $\llbracket U_t\rrbracket$ is the ciphertext of the masked decision value from Eq.~\eqref{eq:fw-blindout}.
Since $\mathcal{A}_{\mathcal{S}}$ controls $\mathcal{S}$, its \emph{view} additionally contains all server-held secrets and randomness, namely
\begin{equation}
\label{eq:th-view}
\mathsf{View}_{\mathcal{S}} :=
\Big(\mathsf{ek}, \mathsf{T}_{\mathsf{uid}}, \{r_t\}_t,\ \{\mathsf{tr}_t\}_t \Big),
\end{equation}
as well as any intermediate ciphertexts deterministically derived during TFHE evaluation (e.g., $\llbracket s_t^{(i)}\rrbracket$ and $\llbracket S_t\rrbracket$),
which are functions of $\mathsf{ek}$, $\mathsf{T}_{\mathsf{uid}}$, and $\llbracket \tilde{\mathbf{X}}_t^\star\rrbracket$.

\paragraph{Security goal (privacy-in-use of behavioral windows).}
Our primary goal is to ensure \emph{input privacy} against $\mathcal{A}_{\mathcal{S}}$:
the server should learn nothing about the client windows beyond inevitable leakage such as public dimensions $(L,d)$, bit-width $b$,
the number of time steps, and the final authentication decision.
We capture this goal via an indistinguishability-based definition on the server view.

Let $\mathbf{X}_{1:T}^{(0)}$ and $\mathbf{X}_{1:T}^{(1)}$ be two sequences of windows of equal length $T$ and identical public meta-data
(i.e., same $(L,d)$, same quantization bit-width $b$, and the same protocol parameters).
Let $\mathsf{View}_{\mathcal{S}}^{(0)}$ (resp.\ $\mathsf{View}_{\mathcal{S}}^{(1)}$) denote the distribution of the server view
(Eq.~\eqref{eq:th-view}) when Protocol~$\Pi_{\mathrm{FBAN}}$ (Algorithm~\ref{alg:protocol}) is executed on $\mathbf{X}_{1:T}^{(0)}$
(resp.\ $\mathbf{X}_{1:T}^{(1)}$).
We say $\Pi_{\mathrm{FBAN}}$ provides \emph{server-side input privacy} if for every PPT distinguisher $\mathcal{D}$,
\begin{equation}
\label{eq:th-ind}
\Big|
\Pr\big[\mathcal{D}(\mathsf{View}_{\mathcal{S}}^{(0)})=1\big]
-
\Pr\big[\mathcal{D}(\mathsf{View}_{\mathcal{S}}^{(1)})=1\big]
\Big|
\le \mathsf{negl}(\lambda).
\end{equation}
Intuitively, even though $\mathcal{S}$ observes the ciphertext stream and performs TFHE evaluation, it cannot distinguish which underlying behavioral
sequence generated the transcript, except with negligible advantage in the TFHE security parameter $\lambda$.

\paragraph{Leakage function and scope.}
For clarity, we make explicit the permitted leakage.
Let $\mathcal{L}$ be the leakage function that returns only public meta-data and the server output decisions:
\begin{equation}
\label{eq:th-leak}
\mathcal{L}(\mathbf{X}_{1:T}) :=
\Big(\mathsf{uid},\, T,\, L,\, d,\, b,\, N,\, \{ \mathsf{sid}_t \}_t,\, \{ \mathsf{Dec}_t \}_t \Big),
\end{equation}
where $\mathsf{Dec}_t\in\{\textsf{Accept},\textsf{Reject}\}$ is the server decision at time $t$.
The guarantee in Eq.~\eqref{eq:th-ind} is understood \emph{modulo} $\mathcal{L}$:
the protocol does not aim to hide $(L,d,b,T)$ nor the final decision sequence, since these are either public configuration values or explicit outputs.
We also emphasize that side channels (e.g., fine-grained timing, cache attacks, or physical compromise of $\mathcal{C}$ revealing $\mathsf{sk}$)
are outside the scope of this work.

\paragraph{Remark on blinding under the semi-honest server model.}
The blinding values $(\boldsymbol{\beta}_t,r_t)$ are generated by $\mathcal{S}$ and thus are included in $\mathsf{View}_{\mathcal{S}}$.
Accordingly, under the honest-but-curious server model, blinding is \emph{not} introduced to further hide inputs from $\mathcal{S}$,
but rather to (i) avoid unnecessary exposure of raw scores to the client and (ii) support a unified protocol interface that can be extended to
stronger adversarial models without changing the TFHE inference core.
The server-side privacy goal of this paper remains Eq.~\eqref{eq:th-ind}, grounded in the semantic security of TFHE encryption.

\section{Experimental Evaluation}
\label{sec:exp}

This section evaluates \textbf{FBAN} in terms of both \emph{authentication utility} and \emph{cryptographic deployability}. 
% We (i) compare FBAN with representative methods in touchscreen-based continuous authentication, reporting AUC, EER, and FRR@FAR$=0.001$ together with Params and FLOPs, and (ii) analyze the effect of the ensemble size $N$ by reporting \textsc{Pretrain}, \textsc{QAT}, and \textsc{TFHE} results across varying $N$.
We (i) compare FBAN with representative methods in touchscreen-based continuous authentication, reporting AUC (area under the receiver operating characteristic curve), EER (equal error rate), and FRR@FAR$=0.001$ (false reject rate at false accept rate $0.001$) together with Params and FLOPs, and (ii) analyze the effect of the ensemble size $N$ by reporting \textsc{Pretrain}, \textsc{QAT}, and \textsc{TFHE} results across varying $N$.

\textbf{Datasets.}
We use two public touchscreen-based continuous authentication datasets and follow a unified split protocol to ensure fair and reproducible evaluation.
\emph{TouchPitfalls}~\cite{AsiaCCS2022} is collected on iPhone devices from 470 users in a remote setting, where participants perform two realistic scenarios: social-media style vertical scrolling and gallery-like horizontal swiping.
\emph{Touchalytics}~\cite{Touchalytics} is collected on Android devices from 41 users, covering both natural browsing/reading interactions and a controlled image-comparison task that elicits repeated swipe gestures.
For each dataset, we partition the samples into disjoint subsets with a $65\%/15\%/20\%$ ratio for training/validation/test, respectively; the validation split is used for model selection and threshold calibration, while all reported results are obtained on the held-out test split.

% \textbf{Datasets.}
% We use two public touchscreen-based continuous authentication datasets.
% \emph{TouchPitfalls}~\cite{AsiaCCS2022} is collected on iPhone devices from 470 users in a remote setting, where participants perform two realistic scenarios: social-media style vertical scrolling and gallery-like horizontal swiping.
% \emph{Touchalytics}~\cite{Touchalytics} is collected on Android devices from 41 users, covering both natural browsing/reading interactions and a controlled image-comparison task that elicits repeated swipe gestures.

% \paragraph{Baselines.}
% We compare FBAN against a broad set of baselines spanning (i) \emph{classical} methods such as SVM-based models with hand-crafted behavioral features, (ii) \emph{CNN-based} time-series encoders, and (iii) \emph{Transformer-based} backbones for sequential modeling.

\textbf{Metrics.}
We report the area under the ROC curve (AUC), equal error rate (EER), and false reject rate (FRR) at a fixed operating point (i.e., FRR@FAR$=\rho$).
To quantify deployability, we additionally report the number of parameters (Params) and computational cost (FLOPs/PLOPs, following a consistent counting convention across methods).

\textbf{Implementation.}
% All experiments are run on an Apple M3 Max processor with 16-core CPU, 48-core GPU, up to 4.05\,GHz and 48\,GB unified memory. Implementations were in Python~3.9.21 using PyTorch~1.13.1 for deep learning and \texttt{Concrete-ML}~\cite{ConcreteML} for TFHE-based homomorphic encryption~\cite{chillotti2020tfhe}.

All experiments are run on a Linux server equipped with an Intel Xeon Platinum 8352S CPU (32 cores, 64 threads) and an NVIDIA L40 GPU with 48 GB of memory. Implementations were in Python~3.9.21 using PyTorch~1.13.1 for deep learning and \texttt{Concrete-ML}~\cite{ConcreteML} for TFHE-based homomorphic encryption~\cite{chillotti2020tfhe}.

\textbf{Experimental configuration.}
For both datasets, each input sample is constructed by concatenating $L=3$ consecutive strokes into a single behavioral window, yielding flattened input dimensions $D = L \times d = 3 \times 27 = 81$ for \textsc{TouchPitfalls} and $D = 3 \times 41 = 123$ for \textsc{Touchalytics}, where $d$ denotes the number of per-stroke features covering stroke geometry/timing, kinematics, touch-signal statistics, and device orientation.
For each target user, impostor (negative) samples are constructed by pooling the strokes of \emph{all remaining users} in the same dataset split, preserving the natural impostor distribution without cross-split leakage.
The decision threshold $\hat{\tau}_{\mathsf{uid}}$ is calibrated per user on the validation split at the target operating point: we compute the ROC curve over validation scores, use linear interpolation to identify the threshold at which FAR equals the target ($\rho=0.001$), and fix this threshold when reporting FRR on the held-out test split.

\subsection{Comparative Analysis}
\label{subsec:exp-baselines}

We compare \textbf{FBAN} with representative baselines including classical SVM-based methods~\cite{AsiaCCS2022} ~\cite{SVMDA} and recent deep backbones (CNN-~\cite{OSCNN} and Transformer-based~\cite{TSCA} models). We report AUC, EER, and FRR at FAR$=0.001$ on both \textsc{TouchPitfalls} and \textsc{Touchalytics}, and summarize model complexity by FLOPs and Params when available.

% =============================== (Main table) ================================
\begin{table}[htbp]
\centering
\caption{Overall comparison on \textsc{TouchPitfalls} and \textsc{Touchalytics} datasets.
The \best{best} and \second{second-best} results are highlighted in color.}
\label{tab:overall-one}

% --- make table compact ---
\scriptsize
\setlength{\tabcolsep}{2.6pt}
\renewcommand{\arraystretch}{1.05}

\begin{tabular}{@{}l ccc ccc cc@{}}
\toprule
\multirow{2}{*}{\textbf{Method}}
& \multicolumn{3}{c}{\textbf{TouchPitfalls~\cite{AsiaCCS2022}}}
& \multicolumn{3}{c}{\textbf{Touchalytics~\cite{Touchalytics}}}
& \multirow{2}{*}{\textbf{FLOPs}}
& \multirow{2}{*}{\textbf{Params}} \\
\cmidrule(lr){2-4}\cmidrule(lr){5-7}
& \textbf{AUC}$\uparrow$ & \textbf{FRR}$^{*}\downarrow$ & \textbf{EER}$\downarrow$
& \textbf{AUC}$\uparrow$ & \textbf{FRR}$^{*}\downarrow$ & \textbf{EER}$\downarrow$
& & \\
\midrule
SVM\_TA$^{\dagger}$~\cite{AsiaCCS2022}  & 93.97 & 18.47 & 8.66 & 98.81 & 10.34 & 4.22 & -- & -- \\
SVM\_DA$^{\dagger}$~\cite{SVMDA}  & 92.80 & 31.43 & 9.49 & 98.79 & 13.07 & 3.58 & -- & -- \\
PatchTST$^{\dagger}$~\cite{PatchTST} & 95.76 & 13.82 & 5.95 & 97.82 & 6.31  & 5.57 & \second{0.078G} & 0.313M \\
TimesNet$^{\dagger}$~\cite{Timesnet} & 96.33 & 10.79 & 4.51 & 99.11 & 5.48  & \second{1.55} & 13.709G & 8.288M \\
OS\_CNN$^{\dagger}$~\cite{OSCNN} & 95.85 & 12.17 & 5.16 & 99.14 & 6.52  & 2.28 & 0.382G & 0.763M \\
TSCA$^{\dagger}$~\cite{TSCA}     & \second{97.59} & \best{8.16} & \second{2.42} & \second{99.45} & \best{3.62} & \best{1.12} & {0.098G} & \second{0.225M} \\
\midrule
FBAN (Ours)$^{\ddagger}$ & \best{98.93} & \second{9.07} & \best{1.18} & \best{99.87} & \second{5.07} & 1.65 & \best{$3.44\times10^{-3}$G} & \best{0.0271M} \\
\bottomrule
\end{tabular}
% \vspace{0.15em}
{\tiny
\begin{flushleft}
$^{*}$~FRR is reported at FAR$=0.001$.\\
$^{\dagger}$~Baseline results are taken from the TSCA~\cite{TSCA}.\\
$^{\ddagger}$~Our results are obtained with \emph{encrypted inputs} under TFHE-based secure inference.\\
\end{flushleft}
}
\end{table}

As shown in Table~\ref{tab:overall-one}, FBAN achieves the best AUC on both datasets (98.93 on \textsc{TouchPitfalls} and 99.87 on \textsc{Touchalytics}); moreover, it attains the lowest EER on \textsc{TouchPitfalls} (1.18), while remaining competitive on \textsc{Touchalytics} (EER$=1.65$). 
In addition, FBAN is substantially more efficient than heavy Transformer baselines in terms of FLOPs/Params, while preserving strong authentication accuracy. 

Most importantly, unlike all compared baselines whose results are reported under plaintext inference, our FBAN results are obtained with \emph{encrypted inputs} under TFHE-based secure inference, meaning that the server never observes raw behavioral features in the clear during authentication. 
This demonstrates that FBAN delivers state-of-the-art continuous authentication performance together with a cryptographic privacy guarantee (privacy-in-use) that is absent from existing baselines.

\subsection{Effect of Ensemble Size}
\label{subsec:exp-ens}

We study the impact of the ensemble size $N$ on authentication performance under three deployment pipelines: \textsc{PreTrain} (floating-point), \textsc{QAT} (integer-only quantized), and \textsc{FHE} (TFHE encrypted inference). We vary $N\in\{1,\ldots,7\}$ and report AUC, EER, and FRR at FAR$=0.001$ on both \textsc{TouchPitfalls} and \textsc{Touchalytics}. Table~\ref{tab:ens-size-vert} summarizes the results.

\begin{table}[htbp]
\centering
\caption{Impact of ensemble size $N$ under \textsc{PreTrain}, \textsc{QAT}, and \textsc{FHE} (TFHE).}
\label{tab:ens-size-vert}
\scriptsize
\setlength{\tabcolsep}{2.2pt}
\renewcommand{\arraystretch}{1.03}
\begin{tabular}{@{}c c ccc ccc ccc@{}}
\toprule
\multirow{2}{*}{\textbf{Dataset}} & \multirow{2}{*}{\textbf{$N$}}
& \multicolumn{3}{c}{\textbf{PreTrain}}
& \multicolumn{3}{c}{\textbf{QAT}}
& \multicolumn{3}{c}{\textbf{FHE}} \\
\cmidrule(lr){3-5}\cmidrule(lr){6-8}\cmidrule(lr){9-11}
& & \textbf{AUC}$\uparrow$ & \textbf{FRR}$^{*}\downarrow$ & \textbf{EER}$\downarrow$
  & \textbf{AUC}$\uparrow$ & \textbf{FRR}$^{*}\downarrow$ & \textbf{EER}$\downarrow$
  & \textbf{AUC}$\uparrow$ & \textbf{FRR}$^{*}\downarrow$ & \textbf{EER}$\downarrow$ \\
\midrule

\multirow{7}{*}{\rotatebox{90}{\textsc{TouchPitfalls}}}
& 1 & 98.85 & 15.33 & 1.89 & 98.88 & 13.04 & 1.58 & 98.88 & 13.04 & 1.58 \\
& 2 & 98.89 & 12.94 & 1.69 & 98.92 & 10.93 & 1.35 & 98.91 & 11.23 & 1.42 \\
& 3 & 98.90 & 11.57 & 1.57 & 98.92 & 9.93  & 1.26 & 98.93 & 10.20 & 1.26 \\
& 4 & 98.90 & 12.09 & 1.56 & 98.93 & 9.81  & 1.24 & 98.93 & 10.06 & 1.28 \\
& 5 & 98.91 & 11.69 & 1.49 & 98.93 & 9.48  & 1.16 & 98.93 & 9.73  & 1.19 \\
& 6 & 98.91 & 11.06 & 1.52 & 98.93 & 9.05  & 1.18 & 98.93 & 9.07  & 1.18 \\
& 7 & 98.91 & 11.33 & 1.50 & 98.94 & 9.33  & 1.16 & 98.93 & 9.38  & 1.16 \\
\midrule

\multirow{7}{*}{\rotatebox{90}{\textsc{Touchalytics}}}
& 1 & 99.85 & 5.03 & 1.69 & 99.83 & 5.68 & 1.88 & 99.83 & 5.68 & 1.88 \\
& 2 & 99.87 & 4.87 & 1.58 & 99.85 & 5.45 & 1.76 & 99.85 & 5.45 & 1.76 \\
& 3 & 99.87 & 4.76 & 1.59 & 99.86 & 5.26 & 1.71 & 99.86 & 5.26 & 1.71 \\
& 4 & 99.88 & 4.74 & 1.54 & 99.87 & 5.07 & 1.65 & 99.87 & 5.07 & 1.65 \\
& 5 & 99.88 & 4.73 & 1.56 & 99.87 & 5.18 & 1.62 & 99.87 & 5.18 & 1.62 \\
& 6 & 99.88 & 4.70 & 1.55 & 99.87 & 5.20 & 1.61 & 99.87 & 5.20 & 1.61 \\
& 7 & 99.88 & 4.66 & 1.55 & 99.87 & 5.12 & 1.60 & 99.87 & 5.12 & 1.60 \\
\bottomrule
\end{tabular}

\vspace{0.20em}
{\tiny
\begin{flushleft}
$^{*}$~FRR is reported at FAR$=0.001$.
\end{flushleft}
}
\end{table}

% Table~\ref{tab:ens-size-vert} shows that increasing the ensemble size $N$ generally improves performance when moving from a single member to a moderate ensemble, as ensembling reduces variance and stabilizes the decision boundary under session-level noise and behavioral fluctuations. This trend is most visible in the reduction of FRR and EER for small-to-medium $N$ (e.g., from $N=1$ to $N\approx 4$--$6$), while AUC exhibits smaller but consistent gains. However, the benefit is not monotonic and quickly saturates: beyond a certain point, additional members yield diminishing returns because the ensemble members become increasingly correlated (sharing similar inductive biases and training signals). In some cases, overly large $N$ can even slightly degrade FRR/EER, which we attribute to an ``over-confident'' aggregation effect---when many highly similar models make correlated errors, stacking/averaging may amplify miscalibrated confidence on hard negatives at the fixed FAR operating point, leading to a higher miss rate. 

% Overall, these results suggest a practical accuracy--efficiency balance rather than an ``always larger is better'' rule, and a moderate $N$ provides the most favorable trade-off in our setting.

Table~\ref{tab:ens-size-vert} shows that increasing the ensemble size $N$ typically improves robustness from small to moderate ensembles (e.g., $N=1$ to $N\approx 4$--$6$), mainly reflected by lower FRR/EER, while AUC exhibits smaller but consistent gains. The improvement, however, quickly saturates as ensemble members become more correlated; overly large $N$ may even slightly degrade FRR/EER due to an ``over-confident'' aggregation effect at the fixed operating point (FAR$=0.001$). Overall, a moderate $N$ yields the best accuracy--efficiency balance in our setting.

We further quantify the deployment overhead: one authentication instance takes $561.43$\,ms under plaintext inference, whereas TFHE end-to-end execution takes $73.25$\,s when including encryption, communication, server evaluation, and decryption. Although this latency is intrinsic to bootstrapped FHE, it is acceptable for continuous authentication as a supplementary security layer (e.g., periodic or risk-triggered re-verification) rather than a replacement for the instantaneous primary unlock, allowing the cryptographic cost to be amortized over longer interaction windows.

\paragraph{Scalability discussion.}
Our protocol maintains a per-user template $\mathsf{T}_{\mathsf{uid}}$ comprising $N$ TFHE-compiled circuits and an $N$-dimensional integer stacking-weight vector ($N\le 7$ integers, negligible storage). Each user's ensemble members are independently compiled offline and loaded on demand. For large-scale deployments, standard lifecycle strategies can bound the server-side working set: lazy loading and LRU caching of active user circuits, offline parallel compilation during enrollment, periodic re-enrollment to accommodate behavioral drift, and evaluation-key rotation aligned with organizational policy.

\section{Security Analysis}
\label{sec:security}

We provide a cryptographic security analysis of $\Pi_{\mathrm{FBAN}}$. The analysis covers (i) confidentiality of behavioral features against an honest-but-curious (HbC) server, and (ii) resistance against external malicious adversaries, including replay and user impersonation. Throughout, we reuse the notation established earlier: the client $\mathcal{C}$ holds the TFHE secret key $\mathsf{sk}$; the server $\mathcal{S}$ holds the TFHE evaluation key $\mathsf{ek}$ and the enrolled policy
$\mathsf{T}_{\mathsf{uid}}=(\{\mathcal{C}_{\mathsf{uid},i}\}_{i=1}^{N},\hat{\mathbf{w}}_{\mathsf{uid}},\hat{\tau}_{\mathsf{uid}})$.
At time step $t$, $\mathcal{C}$ quantizes $\mathbf{X}_t$ into $\tilde{\mathbf{X}}_t:=Q_b(\mathsf{Norm}(\mathbf{X}_t))\in\mathbb{Z}_{2^b}^{L\times d}$ and the server samples fresh session identifier $\mathsf{sid}_t$ and input mask $\boldsymbol{\beta}_t\in\mathbb{Z}_{2^b}^{L\times d}$.
In all experiments, TFHE parameters are selected to meet a target security level of at least $128$ bits ($\lambda\ge 128$) using the backend library's parameter-selection routine.

\subsection{TFHE Assumptions}
We rely on the standard guarantees of TFHE:
(i) IND-CPA security of $\mathsf{TFHE.Enc}$, and
(ii) correctness of homomorphic evaluation, i.e., for any circuit $\mathcal{F}$ and message $m$,
\[
\mathsf{TFHE.Dec}\bigl(\mathsf{sk},\,\mathsf{TFHE.Eval}(\mathsf{ek},\mathcal{F},\mathsf{TFHE.Enc}(\mathsf{sk},m))\bigr)=\mathcal{F}(m)
\]
except with negligible probability in $\lambda$.

\subsection{Correctness of Encrypted Authentication}
At time $t$, $\mathcal{S}$ samples $\mathsf{sid}_t$ and $\boldsymbol{\beta}_t \stackrel{\$}{\leftarrow}\mathbb{Z}_{2^b}^{L\times d}$ and sends $(\mathsf{sid}_t,\boldsymbol{\beta}_t)$ to $\mathcal{C}$.
The client forms $\tilde{\mathbf{X}}_t^\star:=\tilde{\mathbf{X}}_t+\boldsymbol{\beta}_t$ and sends
$\mathsf{ct}_{x,t}:=\mathsf{TFHE.Enc}(\mathsf{sk},\tilde{\mathbf{X}}_t^\star)$ to $\mathcal{S}$.
The server performs in-circuit unblinding (subtracting $\boldsymbol{\beta}_t$), evaluates $\{\mathcal{C}_{\mathsf{uid},i}\}_{i=1}^{N}$, and computes the stacked score
\begin{equation}
S_t=\sum_{i=1}^{N}\hat{w}_{\mathsf{uid},i}\cdot s_t^{(i)},\qquad s_t^{(i)}=\mathcal{C}_{\mathsf{uid},i}(\tilde{\mathbf{X}}_t).
\label{eq:sec:stacking}
\end{equation}
By TFHE correctness, the encrypted execution yields the same $S_t$ (and thus the same accept/reject decision) as the corresponding plaintext computation, up to negligible probability.

\subsection{Confidentiality Against an Honest-but-Curious Server}
We capture unavoidable service leakage at time $t$ as
\[
\mathcal{L}_t := (\mathsf{uid},\mathsf{sid}_t, L,d,b,N,\mathsf{Dec}_t),
\]
where $\mathsf{Dec}_t\in\{\textsf{Accept},\textsf{Reject}\}$ is the server decision. Let $\mathsf{View}_{\mathcal{S}}$ denote the full view of $\mathcal{S}$, including $\mathsf{ek}$, $\mathsf{T}_{\mathsf{uid}}$, all server randomness, and all received messages.

\begin{theorem}[Feature Confidentiality w.r.t.\ an HbC Server]
\label{thm:sec-privacy}
Assume TFHE is IND-CPA secure. For any PPT HbC server $\mathcal{S}$, there exists a PPT simulator $\mathsf{Sim}$ such that, for any behavioral stream $\{\tilde{\mathbf{X}}_t\}_t$, the distributions
\[
\mathsf{View}_{\mathcal{S}} \;\approx_c\; \mathsf{Sim}\bigl(\mathsf{ek},\mathsf{T}_{\mathsf{uid}},\{\mathcal{L}_t\}_t\bigr)
\]
are computationally indistinguishable.
\end{theorem}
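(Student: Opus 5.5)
The plan is a standard simulation-plus-hybrid argument. The simulator replaces every client ciphertext with an encryption of a fixed dummy message. A sequence of hybrids, each justified by IND-CPA of TFHE, then connects the simulated view to the real one.

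\textbf{The simulator.} On input $(\mathsf{ek},\mathsf{T}_{\mathsf{uid}},\{\mathcal{L}_t\}_t)$, $\mathsf{Sim}$ runs the honest server code with its own fresh randomness, step by step. For each $t$ it:
\begin{itemize}
\item takes $\mathsf{sid}_t$ from $\mathcal{L}_t$, samples $\boldsymbol{\beta}_t\xleftarrow{\$}\mathbb{Z}_{2^b}^{L\times d}$ and $r_t\xleftarrow{\$}\mathbb{Z}_{2^\ell}$;
\item produces a dummy input ciphertext $\mathsf{ct}'_{x,t}\leftarrow\mathsf{TFHE.Enc}(\mathsf{sk}',\mathbf{0})$;
\item runs the deterministic server-side evaluation (unblinding, the $N$ member circuits, stacking, thresholding, adding $r_t$) on $\mathsf{ct}'_{x,t}$ to obtain $\llbracket U_t\rrbracket'$;
\item sets the client's returned plaintext to $U_t:=r_t+\mathbb{I}[\mathsf{Dec}_t=\textsf{Accept}]\bmod 2^\ell$, read off from $\mathcal{L}_t$.
\end{itemize}
Here $\mathsf{sk}'$ must be consistent with the given $\mathsf{ek}$. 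I would handle this by stating that $\mathsf{Sim}$ encrypts under the public-key-style encryption oracle available in the IND-CPA game relative to $\mathsf{ek}$. Equivalently, the simulator's encryptions are exactly the challenge ciphertexts of a reduction, which is the only place $\mathsf{sk}$ is needed. The output is $(\mathsf{ek},\mathsf{T}_{\mathsf{uid}},\{r_t\}_t,\{\mathsf{tr}_t\}_t,\{U_t\}_t)$.

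\textbf{The hybrids.} $H_0$ is the real view. For $k=1,\dots,T$, $H_k$ is the real view except that, for $t\le k$, the input ciphertext encrypts $\mathbf{0}$ instead of $\tilde{\mathbf{X}}_t+\boldsymbol{\beta}_t$. Throughout, the returned plaintext $U_t$ is still computed as $r_t+d_t$, with $d_t$ taken from the true decision. To argue $H_{k-1}\approx_c H_k$, build an adversary $\mathcal{B}$ against IND-CPA:
\begin{itemize}
\item $\mathcal{B}$ receives $\mathsf{ek}$ and knows $\mathsf{T}_{\mathsf{uid}}$ and all of $\{\tilde{\mathbf{X}}_t\}$, since the stream is fixed (non-uniform advice).
\item It samples all server randomness itself.
\item It obtains the other steps' ciphertexts from the encryption oracle.
\item At step $k$ it submits the challenge pair $(\tilde{\mathbf{X}}_k+\boldsymbol{\beta}_k,\mathbf{0})$.
\end{itemize}
Every later object $\mathcal{B}$ must output is either a deterministic evaluation under $\mathsf{ek}$ of received ciphertexts or a plaintext it can compute. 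Here $U_k=r_k+d_k$, with $d_k$ known from the plaintext stream and $\mathsf{T}_{\mathsf{uid}}$. So $\mathcal{B}$ simulates $H_{k-1}$ or $H_k$ perfectly, and a hybrid argument over $T=\mathrm{poly}(\lambda)$ steps gives total advantage at most $T\cdot\mathsf{Adv}^{\mathrm{ind\text{-}cpa}}$. Finally $H_T$ is identical to the simulator's output, provided two things hold:
\begin{itemize}
\item the true $d_t$ equals $\mathbb{I}[\mathsf{Dec}_t=\textsf{Accept}]$, which follows from the correctness subsection (Eq.~\eqref{eq:sec:stacking} and TFHE correctness), up to negligible failure probability;
\item $\boldsymbol{\beta}_t,r_t$ are distributed identically in both.
\end{itemize}

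\textbf{Main obstacle.} The delicate point is the returned plaintext $U_t$ together with the evaluated ciphertext $\llbracket U_t\rrbracket$. In $H_T$ the ciphertext $\llbracket U_t\rrbracket$ is derived from an encryption of $\mathbf{0}$, while $U_t$ is set from the leaked decision. The resulting inconsistency must not be detectable. Detecting it would require decrypting $\llbracket U_t\rrbracket$, which is impossible without $\mathsf{sk}$. Formally this is absorbed into the reduction, since $\mathcal{B}$ never needs $\mathsf{sk}$.

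A second subtlety is that TFHE evaluation keys contain encryptions of key material, so IND-CPA must be understood in the presence of $\mathsf{ek}$ (circular security). I would state this explicitly as the standard TFHE IND-CPA assumption, i.e.\ security given the bootstrapping and key-switching keys, and note that correctness failures are absorbed as a negligible additive term.
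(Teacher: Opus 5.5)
\paragraph{Assessment.}
Your hybrid argument follows the paper's route: replace each $\mathsf{ct}_{x,t}$ by an encryption of $\mathbf{0}$ using IND-CPA given $\mathsf{ek}$, and note that the public, deterministic $\mathsf{TFHE.Eval}$ preserves indistinguishability. In one respect it is more careful than the paper. The paper's sketch calls $\mathsf{ct}_{x,t}$ the only feature-dependent message and so overlooks the client's cleartext reply $U_t$, which equals $r_t+d_t$. Your choice $U_t:=r_t+\mathbb{I}[\mathsf{Dec}_t=\textsf{Accept}]$, read off the leakage, is the right fix. The reductions $H_{k-1}\approx_c H_k$ are also sound. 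In the secret-key IND-CPA game, $\mathcal{B}$ legitimately has an encryption oracle, and it knows $d_k$ from the fixed stream.

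\paragraph{The step that fails.}
The final identification of $H_T$ with the output of a PPT simulator does not go through as written. $\mathsf{Sim}$ receives only $(\mathsf{ek},\mathsf{T}_{\mathsf{uid}},\{\mathcal{L}_t\}_t)$. It is not playing an IND-CPA game, so it has no encryption oracle. TFHE as abstracted in the paper is secret-key, so a key $\mathsf{sk}'$ ``consistent with $\mathsf{ek}$'' is functionally $\mathsf{sk}$ itself, which $\mathsf{Sim}$ cannot obtain. Saying the simulator's encryptions ``are the challenge ciphertexts of a reduction'' shows that the real view is indistinguishable from the $\mathsf{sk}$-dependent distribution $H_T$. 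That suffices for the IND-style goal in Eq.~\eqref{eq:th-ind}, but it does not establish the existence of $\mathsf{Sim}$, which is what the theorem asserts.

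\paragraph{How to repair it.}
Either of two short changes works.
\begin{itemize}
\item Let $\mathsf{Sim}$ run $\mathsf{TFHE.KeyGen}(1^\lambda)$ itself and put its own $\mathsf{ek}'$ into the simulated view. This is legitimate because $\mathsf{ek}$ is a message the server receives from $\mathcal{C}$. Then $H_T$ and the simulator's output are identically distributed, up to the TFHE correctness-failure event.
\item If the simulated view must contain the given $\mathsf{ek}$, have $\mathsf{Sim}$ output uniformly random ciphertexts. Add one hybrid based on pseudorandomness of fresh TFHE (LWE) ciphertexts given $\mathsf{ek}$. This is slightly stronger than plain IND-CPA, but it holds under the same LWE and circular-security assumption you already invoke.
\end{itemize}
The paper's own sketch leaves this same point implicit when it says the simulator ``outputs a simulated transcript''.
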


\noindent\textit{Proof sketch.}
For each time step $t$, the only feature-dependent message received by $\mathcal{S}$ is $\mathsf{ct}_{x,t}=\mathsf{TFHE.Enc}(\mathsf{sk},\tilde{\mathbf{X}}_t+\boldsymbol{\beta}_t)$.
By IND-CPA security (even given $\mathsf{ek}$), $\mathsf{ct}_{x,t}$ is indistinguishable from $\mathsf{TFHE.Enc}(\mathsf{sk},\mathbf{0})$ for a zero tensor $\mathbf{0}$ of the same shape. Replacing real ciphertexts with encryptions of $\mathbf{0}$ yields a standard hybrid argument; applying the (public) PPT procedure $\mathsf{TFHE.Eval}$ preserves indistinguishability. The simulator samples the same server-side randomness and outputs a simulated transcript consistent with the declared leakage $\{\mathcal{L}_t\}_t$.

\subsection{Impersonation Security: Formal Experiment and Bound}
We now formalize resistance to impersonation by an external adversary without $\mathsf{sk}$. Intuitively, $\Pi_{\mathrm{FBAN}}$ requires the client to decrypt the server-produced ciphertext and return a value consistent with the current session; without $\mathsf{sk}$, an impersonator can neither (i) generate fresh well-formed TFHE encryptions, nor (ii) decrypt the server's ciphertext to produce the correct response.

\paragraph{Impersonation experiment.}
Fix a user identity $\mathsf{uid}$. Consider the following experiment $\mathsf{Exp}^{\mathsf{imp}}_{\Pi_{\mathrm{FBAN}}}(\mathcal{A},\lambda)$:

\begin{enumerate}
\item The challenger runs $(\mathsf{sk},\mathsf{ek})\leftarrow \mathsf{TFHE.KeyGen}(1^\lambda)$ and gives $\mathsf{ek}$ and $\mathsf{T}_{\mathsf{uid}}$ to $\mathcal{A}$.
\item $\mathcal{A}$ is given oracle access to an \emph{enrollment oracle} (optional) that returns the public enrollment transcript for $\mathsf{uid}$, and to an \emph{execution oracle} $\mathcal{O}_{\mathsf{auth}}$ that simulates honest sessions with the legitimate client $\mathcal{C}$ for adaptively chosen time steps (i.e., the full transcript of a run of $\Pi_{\mathrm{FBAN}}$ with honest $\mathcal{C}$, including the messages exchanged).\footnote{This is a strong model: the adversary may observe prior authentication transcripts.}
\item Eventually, $\mathcal{A}$ outputs a purported client message strategy for a fresh time step $t^\star$ with a fresh session identifier $\mathsf{sid}_{t^\star}$ (not previously used by $\mathcal{O}_{\mathsf{auth}}$). The challenger runs the server-side algorithm with fresh randomness, interacts with $\mathcal{A}$ as the client, and outputs $1$ iff the server outputs \textsf{Accept}.
\end{enumerate}

Define the impersonation advantage as
\[
\mathsf{Adv}^{\mathsf{imp}}_{\Pi_{\mathrm{FBAN}}}(\mathcal{A},\lambda)
:=\Pr\!\left[\mathsf{Exp}^{\mathsf{imp}}_{\Pi_{\mathrm{FBAN}}}(\mathcal{A},\lambda)=1\right].
\]

\paragraph{Bound (protocol-level impersonation resistance).}
Our interactive design enforces \emph{decrypt-and-return}: the server sends a ciphertext that decrypts to a session-dependent masked value, and the client must return the corresponding plaintext. Without $\mathsf{sk}$, the best possible strategy is to guess a response.

To make this statement precise, we use the \emph{masked decision return} in Algorithm~\ref{alg:protocol} (Eq.~\eqref{eq:fw-blindout}).
At each $t$, the server computes an encrypted decision bit
$d_t:=\mathbb{I}[S_t\ge \hat{\tau}_{\mathsf{uid}}]\in\{0,1\}$ under TFHE, samples a fresh pad $r_t\xleftarrow{\$}\mathbb{Z}_{2^\ell}$, and sends a ciphertext of
$U_t:=r_t+d_t \bmod 2^\ell$ to the client.
The client must decrypt and return $U_t$ in the clear; the server accepts iff $(U_t-r_t)\bmod 2^\ell = 1$.

\begin{theorem}[Impersonation Soundness in a Fresh Session]
\label{thm:impersonation}
Assume (i) TFHE is IND-CPA secure and (ii) $r_t$ is uniform over $\mathbb{Z}_{2^\ell}$ and hidden from the adversary. Then for any PPT adversary $\mathcal{A}$ without access to $\mathsf{sk}$,
\[
\mathsf{Adv}^{\mathsf{imp}}_{\Pi_{\mathrm{FBAN}}}(\mathcal{A},\lambda)
\;\le\;
2^{-\ell} + \mathsf{negl}(\lambda).
\]
\end{theorem}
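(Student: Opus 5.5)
The plan is a short game-hopping argument that reduces acceptance in the fresh session $t^\star$ to guessing the pad $r_{t^\star}$. First, fix what $\mathcal{A}$ actually sees. Its view consists of $\mathsf{ek}$, $\mathsf{T}_{\mathsf{uid}}$, and the transcripts returned by $\mathcal{O}_{\mathsf{auth}}$. In the challenge session it also receives $(\mathsf{sid}_{t^\star},\boldsymbol{\beta}_{t^\star})$ and, after submitting some ciphertext $\mathsf{ct}^\star$, the reply $\llbracket U_{t^\star}\rrbracket$. The first observation is that the pads $r_t$ of oracle sessions are sampled independently of $r_{t^\star}$. Those pads are used only inside the server, so oracle transcripts carry no information about $r_{t^\star}$ beyond what is already fixed by $\mathsf{sk}$ and the honest client's inputs. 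The oracle can therefore be simulated by the reduction, since it samples the server randomness itself. The client side needs encryptions under $\mathsf{sk}$, which I would obtain from the IND-CPA encryption oracle; in the secret-key setting of the paper this oracle is exactly what IND-CPA provides.

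Second, handle the challenge ciphertext. Whatever $\mathsf{ct}^\star$ is, the server homomorphically computes a ciphertext $\llbracket d_{t^\star}\rrbracket$ for some bit $d_{t^\star}$ that may be ill-defined or adversarially influenced. It then applies $\mathcal{C}_{+r_{t^\star}}$ and sends back $\llbracket U_{t^\star}\rrbracket$. I would introduce a hybrid $\mathsf{H}_1$ in which this reply is replaced by $\mathsf{TFHE.Enc}(\mathsf{sk},0)$. The claim is $|\Pr[\mathsf{H}_0=1]-\Pr[\mathsf{H}_1=1]|\le\mathsf{negl}(\lambda)$, argued by reduction to IND-CPA. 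The reduction $\mathcal{B}$ knows $r_{t^\star}$, so it can evaluate the acceptance predicate $(U-r_{t^\star})\bmod 2^\ell=1$ on $\mathcal{A}$'s response and output $1$ exactly when the server would accept. $\mathcal{B}$ must also produce the challenge pair $(U_{t^\star},0)$, which requires knowing the plaintext $U_{t^\star}=r_{t^\star}+d_{t^\star}$. Here I would argue that since $d_{t^\star}\in\{0,1\}$, $\mathcal{B}$ can guess $d_{t^\star}$ and lose at most a factor of $2$. Alternatively, it can use the circuit-privacy or sanitization convention implicit in the paper's abstraction that $\mathsf{TFHE.Eval}$ outputs fresh-looking ciphertexts.

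Third, in $\mathsf{H}_1$ the adversary's entire view is independent of $r_{t^\star}$, which is uniform on $\mathbb{Z}_{2^\ell}$ and never revealed. The server accepts iff $\mathcal{A}$'s returned value equals $r_{t^\star}+1 \bmod 2^\ell$. For any fixed response this happens with probability exactly $2^{-\ell}$. Adding the hybrid loss gives the stated bound $2^{-\ell}+\mathsf{negl}(\lambda)$. Replay does not help $\mathcal{A}$: freshness of $\mathsf{sid}_{t^\star}$ forces a new $r_{t^\star}$, so old $U_t$ values are useless.

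I expect the main obstacle to be the second step. The returned ciphertext is the output of $\mathsf{TFHE.Eval}$ on an adversarially chosen input rather than a fresh encryption, so IND-CPA does not directly apply to it. Its noise or structure could in principle leak $r_{t^\star}$, because the server adds $r_{t^\star}$ as a plaintext constant. I would first try to close this gap by assuming evaluated ciphertexts are re-randomized, for instance by a final bootstrapping whose output distribution depends only on the plaintext. Failing that, I would state circuit privacy as an explicit additional assumption alongside IND-CPA. I would also note that an adversary who cannot decrypt still learns nothing about $U_{t^\star}$ under this assumption, which is exactly what the guessing bound requires.
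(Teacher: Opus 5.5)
Your skeleton is the same as the paper's sketch: without $\mathsf{sk}$, the returned $\hat U$ is independent of $r_{t^\star}$, so acceptance has probability $2^{-\ell}$. The step you flag is a real gap, and the paper passes over it in one line (``otherwise one could break TFHE security''). It is worse than a possible noise leak, because the conclusion does not follow from (i)--(ii) at all.

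In TFHE, $\mathsf{TFHE.Eval}$ is deterministic and uses only public data. In $\mathsf{Exp}^{\mathsf{imp}}$ the adversary holds $\mathsf{ek}$, $\mathsf{T}_{\mathsf{uid}}$, $\boldsymbol{\beta}_{t^\star}$ and its own $\mathsf{ct}^\star$, so it can recompute $\llbracket d_{t^\star}\rrbracket$ exactly. The step $\mathcal{C}_{+r_{t^\star}}$ is a clear-constant addition, which only shifts the LWE body by the encoding of $r_{t^\star}$. Comparing the reply with $\llbracket d_{t^\star}\rrbracket$ therefore reveals $r_{t^\star}$, and returning $r_{t^\star}+1$ wins with probability $1$. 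Submitting trivial zero-mask ciphertexts, which stay trivial through the whole pipeline, is a second attack.

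So your extra assumption is unavoidable, but it must come with a change to Algorithm~\ref{alg:protocol}. Two candidate fixes do not work:
\begin{itemize}
\item A standard TFHE bootstrapping cannot serve as the sanitizer. A deterministic public map whose output depended only on the plaintext would give a plaintext-equality test and break IND-CPA.
\item Ordinary circuit privacy is not enough either. It covers honestly generated inputs, whereas $\llbracket d_{t^\star}\rrbracket$ is derived from an adversarial $\mathsf{ct}^\star$.
\end{itemize}
What you need is for the server to re-randomize $\llbracket U_{t^\star}\rrbracket$ with randomness hidden from $\mathcal{A}$, for example by adding a fresh public-key encryption of $0$ with flooding noise. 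Then the reply's distribution depends only on $r_{t^\star}+d_{t^\star}$. What you prove is then a corrected theorem, not the stated one.

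Second, even granting sanitization, your ``guess $d_{t^\star}$, lose a factor $2$'' step does not give the stated constant. The reduction cannot decrypt, so it cannot detect a wrong guess $d'$ and cannot abort on it. With challenge pair $(r_{t^\star}+d',0)$ you only obtain $\tfrac12\Pr[\mathsf{H}_0=1]\le 2^{-\ell}+\mathsf{Adv}^{\mathsf{cpa}}$, that is, $\Pr[\mathsf{H}_0=1]\le 2^{1-\ell}+2\,\mathsf{Adv}^{\mathsf{cpa}}$. The factor $2$ therefore hits the $2^{-\ell}$ term, and your hybrid claim $|\Pr[\mathsf{H}_0=1]-\Pr[\mathsf{H}_1=1]|\le\mathsf{negl}(\lambda)$ is never established.

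The missing idea is to avoid needing $d_{t^\star}$ at all:
\begin{itemize}
\item The reduction samples $u\xleftarrow{\$}\mathbb{Z}_{2^\ell}$, submits $(u,0)$, and implicitly sets $r_{t^\star}:=u$.
\item It forms the reply by homomorphically adding the challenge ciphertext to the publicly computable $\llbracket d_{t^\star}\rrbracket$, then sanitizing.
\item For $b=0$ the reply is statistically close to the real one.
\item For $b=1$ the whole view is independent of $u$, so the test $\hat U=u+1$, which needs no $\mathsf{sk}$, succeeds with probability exactly $2^{-\ell}$.
\end{itemize}
This gives $2^{-\ell}+\mathsf{negl}(\lambda)$ as stated. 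Finally, when simulating $\mathcal{O}_{\mathsf{auth}}$, state explicitly that the honest client's returned value is produced as $U_t=r_t+d_t$. Here $d_t$ is computed in the clear from the known window and $\mathsf{T}_{\mathsf{uid}}$ (by correctness), since the reduction cannot decrypt.
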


\noindent\textit{Proof sketch.}
In a fresh session, the server's message to the client is a TFHE ciphertext of $U_t=r_t+d_t$ with a fresh, secret $r_t$.
Any adversary without $\mathsf{sk}$ cannot decrypt this ciphertext; thus its returned value $\hat{U}$ is independent of $r_t$ except with negligible probability (otherwise one could break TFHE security).
The server accepts only if $\hat{U}=r_t+1 \bmod 2^\ell$, which occurs with probability exactly $1/2^\ell$ for a uniform $r_t$.
The negligible term accounts for standard reduction loss and TFHE correctness failure events.

\paragraph{Interpretation.}
Theorem~\ref{thm:impersonation} formalizes that, without $\mathsf{sk}$, a transcript-level impersonator cannot force an \textsf{Accept} outcome in a fresh session except with negligible probability. In practice, choosing $\ell\ge 64$ makes this probability negligible.
\subsection{Replay and Transcript Reuse}
A replay adversary that resends a previously recorded ciphertext $\mathsf{ct}_{x,t_0}$ at time $t\neq t_0$ faces a fresh $(\mathsf{sid}_t,\boldsymbol{\beta}_t,r_t)$.
Even if the adversary replays an old ciphertext, the server's final message in the session is a fresh ciphertext of the masked decision $U_t=r_t+d_t$ (Eq.~\eqref{eq:fw-blindout}), which must be decrypted and returned by the client.
Without $\mathsf{sk}$, the adversary cannot answer this fresh decrypt-and-return challenge, so replay does not yield a valid accepting transcript for a fresh session except with probability bounded by Theorem~\ref{thm:impersonation}. The server additionally rejects reused session identifiers $\mathsf{sid}$.

\subsection{Summary}
Under standard TFHE assumptions, $\Pi_{\mathrm{FBAN}}$ provides privacy-in-use of behavioral features against an HbC server. Moreover, we formalize impersonation resistance via $\mathsf{Exp}^{\mathsf{imp}}$ and show that, without $\mathsf{sk}$, any PPT adversary can succeed only with essentially random-guess probability (plus negligible terms), while replay is prevented by session freshness and the decrypt-and-return structure. All reported experiments correspond to TFHE parameter sets selected at a target security level of at least $128$ bits.

\section{Conclusions}
\label{sec:conclusion}

This paper studied privacy-preserving continuous behavioral authentication and presented \textbf{FBAN}, an FHE-compatible bottleneck-attention network tailored to TFHE-style integer-only computation. By integrating quantization-aware training with homomorphic-friendly attention design, FBAN enables end-to-end encrypted inference while maintaining a lightweight model footprint practical for periodic or risk-triggered re-verification scenarios.

Our experiments on TouchPitfalls and Touchalytics show that FBAN achieves competitive authentication accuracy under a unified evaluation pipeline, and remains effective across \textsc{Pretrain}, \textsc{QAT}, and \textsc{TFHE} settings. From a cryptographic perspective, we formalized feature confidentiality against an honest-but-curious server under standard TFHE assumptions and provided a protocol-level impersonation experiment, showing that an adversary without the TFHE secret key can succeed only with negligible (random-guess) probability.

Overall, FBAN demonstrates that strong continuous-authentication performance can be obtained without exposing sensitive behavioral traces to the server, making it a practical step toward cryptography-grade privacy in behavioral biometrics.

\section*{Acknowledgment}
This work is partially supported by JST BOOST, Japan Grant Number JPMJBS2402, JSPS KAKENHI Grant Number 26H02496 and SECOM Science and Technology Foundation.

% \section*{References}
\bibliographystyle{splncs04}
\bibliography{ref2}

@article{abuhamad2020sensor,
  title   = {Sensor-based continuous authentication of smartphones’ users using behavioral biometrics: A contemporary survey},
  author  = {Abuhamad, Mohammed and Abusnaina, Ahmed and others},
  journal = {IEEE Internet of Things Journal},
  year    = {2020},
  volume  = {8},
  number  = {1},
  pages   = {65--84}
}

@article{stylios2021behavioral,
  title   = {Behavioral biometrics \& continuous user authentication on mobile devices: A survey},
  author  = {Stylios, Ioannis and Kokolakis, Spyros and others},
  journal = {Information Fusion},
  year    = {2021},
  volume  = {66},
  pages   = {76--99}
}

@inproceedings{TSCA,
  title     = {TSCA: Enhancing Smartphone Security with a Touchpoints-Sensitive and Context-Aware Model for Touchscreen-Based Authentication},
  author    = {Wang, Yingjie and Hu, Ruimin},
  booktitle = {2025 28th International Conference on Computer Supported Cooperative Work in Design (CSCWD)},
  year      = {2025},
  pages     = {1937--1942}
}

@article{ayeswarya2024comprehensive,
  title   = {A comprehensive review on secure biometric-based continuous authentication and user profiling},
  author  = {Ayeswarya, S and Singh, K John},
  journal = {IEEE Access},
  year    = {2024},
  volume  = {12},
  pages   = {82996--83021}
}

@article{baig2021security,
  title   = {Security, privacy, and usability in continuous authentication: A survey},
  author  = {Baig, Ahmed Fraz and Eskeland, Sigurd},
  journal = {Sensors},
  year    = {2021},
  volume  = {21},
  number  = {17},
  pages   = {5967}
}

@inproceedings{yang2012novel,
  title     = {A novel biometric authentication scheme with privacy preserving},
  author    = {Yang, Dexin and Xu, Baolin and others},
  booktitle = {2012 Eighth International Conference on Computational Intelligence and Security},
  year      = {2012},
  pages     = {452--456}
}

@inproceedings{xiong2024privacy,
  title        = {A Privacy-Preserving Computer-Aided Diagnosis Framework for Medical Applications Using Federated Learning and Homomorphic Encryption},
  author       = {Xiong, Jichao and Chen, Jiageng and Liu, Hui and Zhou, Guangyou and Cui, Jianqun and Lin, Junyu and Jiao, Dian},
  booktitle    = {International Conference on Attacks and Defenses for Internet-of-Things},
  pages        = {90--106},
  year         = {2024},
  organization = {Springer}
}

@article{xiong2024enhancing,
  title   = {Enhancing privacy-preserving machine learning with self-learnable activation functions in fully homomorphic encryption},
  author  = {Xiong, Jichao and Chen, Jiageng and others},
  journal = {Journal of Information Security and Applications},
  year    = {2024},
  volume  = {86},
  pages   = {103887}
}

@article{chillotti2020tfhe,
  title   = {TFHE: fast fully homomorphic encryption over the torus},
  author  = {Chillotti, Ilaria and Gama, Nicolas and others},
  journal = {Journal of Cryptology},
  year    = {2020},
  volume  = {33},
  number  = {1},
  pages   = {34--91}
}

@ARTICLE{Touchalytics,
  author  = {Frank, Mario and Biedert, Ralf and others},
  title   = {Touchalytics: On the Applicability of Touchscreen Input as a Behavioral Biometric for Continuous Authentication},
  journal = {IEEE Transactions on Information Forensics and Security},
  year    = {2013},
  volume  = {8},
  number  = {1},
  pages   = {136--148}
}

@article{lin2025lightweight,
  title   = {A Lightweight Privacy-Preserving Federated Learning Framework for Heterogeneity-Resilient Skin Cancer Diagnosis},
  author  = {Lin, Junyu and Chen, Jiageng and others},
  journal = {IEEE Journal of Biomedical and Health Informatics},
  year    = {2025}
}

@misc{ConcreteML,
  title  = {Concrete {ML}: a Privacy-Preserving Machine Learning Library using Fully Homomorphic Encryption for Data Scientists},
  author = {Zama},
  year   = {2022}
}

@article{sagi2018ensemble,
  title   = {Ensemble learning: A survey},
  author  = {Sagi, Omer and Rokach, Lior},
  journal = {Wiley interdisciplinary reviews: data mining and knowledge discovery},
  year    = {2018},
  volume  = {8},
  number  = {4},
  pages   = {e1249}
}

@book{gentry2009fully,
  title  = {A fully homomorphic encryption scheme},
  author = {Gentry, Craig},
  year   = {2009},
  publisher = {Stanford university}
}

@article{sitova2015hmog,
  title   = {HMOG: New behavioral biometric features for continuous authentication of smartphone users},
  author  = {Sitov{\'a}, Zde{\v{n}}ka and {\v{S}}ed{\v{e}}nka, Jaroslav and others},
  journal = {IEEE Transactions on Information Forensics and Security},
  year    = {2015},
  volume  = {11},
  number  = {5},
  pages   = {877--892}
}

@article{dilshan2023behaveformer,
  title   = {Behaveformer: a framework with spatio-temporal dual attention transformers for imu enhanced keystroke dynamics},
  author  = {Dilshan, S and Sanuja, T and others},
  journal = {Proceedings in Computer Science, Cryptography and Security},
  year    = {2023}
}

@article{ometov2018multi,
  title   = {Multi-factor authentication: A survey},
  author  = {Ometov, Aleksandr and Bezzateev, Sergey and others},
  journal = {Cryptography},
  year    = {2018},
  volume  = {2},
  number  = {1},
  pages   = {1}
}

@article{chen2022x,
  title   = {The-x: Privacy-preserving transformer inference with homomorphic encryption},
  author  = {Chen, Tianyu and Bao, Hangbo and others},
  journal = {arXiv preprint arXiv:2206.00216},
  year    = {2022}
}

@inproceedings{georgiev2022techniques,
  title     = {Techniques for continuous touch-based authentication},
  author    = {Georgiev, Martin and Eberz, Simon and others},
  booktitle = {International Conference on Information Security Practice and Experience},
  year      = {2022},
  pages     = {409--431}
}

@inproceedings{bai2023cryptomask,
  title     = {CryptoMask: Privacy-preserving face recognition},
  author    = {Bai, Jianli and Zhang, Xiaowu and others},
  booktitle = {International Conference on Information and Communications Security},
  year      = {2023},
  pages     = {333--350}
}

@inproceedings{gilad2016cryptonets,
  title     = {Cryptonets: Applying neural networks to encrypted data with high throughput and accuracy},
  author    = {Gilad-Bachrach, Ran and Dowlin, Nathan and others},
  booktitle = {International conference on machine learning},
  year      = {2016},
  pages     = {201--210}
}

@article{zhang2024secure,
  title   = {Secure transformer inference made non-interactive},
  author  = {Zhang, Jiawen and Yang, Xinpeng and others},
  journal = {Cryptology ePrint Archive},
  year    = {2024}
}

@techreport{park2024powerformer,
  title       = {Powerformer: Efficient privacy-preserving transformer with batch rectifier-power max function and optimized homomorphic attention},
  author      = {Park, Dongjin and Lee, Eunsang and others},
  institution = {Cryptology ePrint Archive},
  year        = {2024}
}

@inproceedings{AsiaCCS2022,
  title     = {Common evaluation pitfalls in touch-based authentication systems},
  author    = {Georgiev, Martin and Eberz, Simon and others},
  booktitle = {Proceedings of the 2022 ACM on Asia Conference on Computer and Communications Security},
  year      = {2022},
  pages     = {1049--1063}
}

@article{SVMDA,
  title   = {DAKOTA: Sensor and touch screen-based continuous authentication on a mobile banking application},
  author  = {Incel, {\"O}zlem Durmaz and G{\"u}nay, Se{\c{c}}il and others},
  journal = {IEEE Access},
  year    = {2021},
  volume  = {9},
  pages   = {38943--38960}
}

@article{PatchTST,
  title   = {A Time Series is Worth 64Words: Long-term Forecasting with Transformers},
  author  = {Nie, Y},
  journal = {arXiv preprint arXiv:2211.14730},
  year    = {2022}
}

@article{Timesnet,
  title   = {Timesnet: Temporal 2d-variation modeling for general time series analysis},
  author  = {Wu, Haixu and Hu, Tengge and others},
  journal = {arXiv preprint arXiv:2210.02186},
  year    = {2022}
}

@article{OSCNN,
  title   = {Omni-scale cnns: a simple and effective kernel size configuration for time series classification},
  author  = {Tang, Wensi and Long, Guodong and others},
  journal = {arXiv preprint arXiv:2002.10061},
  year    = {2020}
}

@inproceedings{CKKS,
  title={Homomorphic encryption for arithmetic of approximate numbers},
  author={Cheon, Jung Hee and Kim, Andrey and Kim, Miran and Song, Yongsoo},
  booktitle={International conference on the theory and application of cryptology and information security},
  pages={409--437},
  year={2017},
  organization={Springer}
}

@inproceedings{xiong2025stresssentry,
  title={StressSentry-FHE: A Transformer-Based Privacy-Preserving Framework for Stress Detection Using Quantized Attention},
  author={Xiong, Jichao and Chen, Jiageng and others},
  booktitle={International Conference on Knowledge Science, Engineering and Management},
  pages={300--311},
  year={2025},
  organization={Springer}
}

@article{yao2026strokepin,
  title={StrokePIN: Enhancing PIN Authentication with Keystroke Dynamics for Mobile Devices},
  author={Yao, Jingyu and Wang, Ding},
  journal={IEEE Transactions on Dependable and Secure Computing},
  year={2026},
  publisher={IEEE}
}

@inproceedings{zhu2023soundlock,
  title={SoundLock: A Novel User Authentication Scheme for VR Devices Using Auditory-Pupillary Response},
  author={Zhu, Huadi and Xiao, Mingyan and others},
  booktitle={NDSS},
  year={2023}
}

\end{document}